\newtheorem{theorem}{Theorem}
\newtheorem{lemma}{Lemma}
\newtheorem{proposition}{Proposition}
\newtheoremstyle{remarkcolon}{3pt}{3pt}{\normalfont}{}{\bfseries}{:}{0.5em}{}     
\theoremstyle{remarkcolon}
\newtheorem{remark}{Remark}
\renewcommand{\proofname}{Proof}
\renewenvironment{proof}[1][\proofname]{%
	\par\pushQED{\qed}\normalfont
	\topsep6\p@\@plus6\p@\relax
	\trivlist
	\item[\hskip\labelsep\itshape #1\@addpunct{:}] 
	\ignorespaces
}{%
	\popQED\endtrivlist\@endpefalse
}
\begin{document}


\title{Flexible-Sector 6DMA Base Station:\\ Modeling and Design }

\author{Yunli Li, \IEEEmembership{Member, IEEE}, Xiaoming Shi, Xiaodan Shao, \IEEEmembership{Member, IEEE}, Jie Xu, \IEEEmembership{Fellow, IEEE}, and Rui Zhang, \IEEEmembership{Fellow, IEEE}
\thanks{An earlier version of this paper was presented in part at the IEEE International Conference on Communications, Glasgow, Scotland, UK, June 2026 [DOI: 10.1109/ICC59461.2026.11587629] \cite{11587629}. }}
 
\markboth{Journal of \LaTeX\ Class Files,~Vol.~14, No.~8, August~2021}%
{Shell \MakeLowercase{\textit{et al.}}: A Sample Article Using IEEEtran.cls for IEEE Journals}

\markboth{}
{Shell \MakeLowercase{\textit{et al.}}: Bare Demo of IEEEtran.cls for IEEE Journals}

\maketitle

\begin{abstract}

Six-dimensional movable antenna (6DMA) technology creates new spatial degrees of freedom by adjusting antenna positions and orientations, but its full implementation may require hardware and control complexity at base stations (BSs). This paper proposes a cost-effective flexible-sector BS architecture, in which antennas can move along a circular track to support common sector rotation and adaptive antenna allocation across sectors according to the spatial user distribution. We consider uplink transmission in a single-cell system and introduce an angular-domain user distribution model to characterize spatial traffic clustering. With zero-forcing reception, we derive the average sum rate as a function of sector rotation and antenna allocation. We further develop a two-step optimization algorithm that combines sector-rotation search with load-aware antenna allocation. The analysis shows that the optimal number of antennas assigned to each sector increases linearly with the number of users in that sector. Under the most favorable angular user distribution, the asymptotic per-user rate gap between the most and least favorable sector-level user distributions approaches $\log_{2}(B)$ bps/Hz as the number of antennas grows, where $B$ is the number of sectors. Numerical results demonstrate higher sum-rate performance over benchmarks for non-uniform user distributions, with robust gains under non-ideal antenna patterns and practical receivers.

\end{abstract}

\begin{IEEEkeywords}
	Flexible-sector base station, angular-domain user distribution, zero-forcing (ZF) receiver, common sector rotation, antenna allocation optimization.
\end{IEEEkeywords} 

\section{Introduction}

Multiple-input multiple-output (MIMO) technology has played a pivotal role in wireless communication systems, which introduces new degrees of freedom (DoFs) in the spatial domain to enable significant array gains as well as spatial multiplexing and diversity gains, thereby substantially enhancing wireless communication rate and reliability \cite{lu2014overview}. Recently, wireless communication systems are increasingly shifting toward higher frequency bands such as millimeter-wave (mmWave) \cite{gao2015mmwave} and terahertz (THz) \cite{wan2021terahertz}, with decreasing wavelength which allows for smaller-size antenna elements. This thus facilitates the deployment of MIMO systems at larger scales. In particular, technologies such as massive MIMO \cite{marzetta2015massive,albreem2019massive,6736761,hua2023mimo} and extremely large-scale MIMO \cite{wang2024tutorial,lu2024tutorial,lu2021communicating,zhang2022fast,huang2023sparse,hua2025near} have emerged. These systems leverage extensive antenna arrays to achieve higher array gains, compensate for the severe propagation losses, and more effectively suppress multi-user interference.

Despite such advancements, conventional MIMO systems primarily rely on fixed-antenna deployment at the base station (BS), where both antenna positions and coverage patterns remain static, even in the face of spatial and temporal fluctuations in wireless channels and user distributions. As such, these fixed-antenna BSs possess limited capability in dynamically adjusting beam directions, aperture configurations, and sector boundaries. This inflexibility results in inefficient spectrum usage in areas with low user density and network congestion in typical hotspots like stadiums and major transportation hubs. Consequently, fixed-antenna systems fail to fully exploit the spatial DoFs offered by MIMO, resulting in degraded spectral and energy efficiency, particularly under heterogeneous user distributions, rapidly changing mobility patterns, and highly dynamic channel conditions \cite{gesbert2007shifting,zeng2023task,zheng2021double}.

To tackle these issues, fluid antenna system (FAS) \cite{new2024tutorial} and movable antenna (MA) \cite{zhu2025tutorial} have been proposed as new techniques to enhance the antenna reconfiguration and adaptation capabilities, by dynamically adjusting antenna positions within a given physical aperture to adapt to the channel conditions. As such, they enable more efficient use of antenna DoFs as compared to traditional fixed-position antennas. However, FAS and MA techniques mainly exploit the small-scale channel fading to facilitate constructive/destructive combining of multipath signals, thereby enhancing desired signal strength and eliminating undesired interference for improving communication performance \cite{zhu2024performance,chen2024exploiting,mei2024movable,liu2025near,zheng2024flexible,shi2025joint}. Thus, their practically achievable performance gains critically rely on the availability of accurate channel state information (CSI) and the sufficiently high movement speed of antennas to keep track of fast channel variation.

More recently, six-dimensional movable antenna (6DMA) has been proposed \cite{shao20246d}, which incorporates the new antenna rotational DoF in addition to the antenna position adjustment of FAS/MA. This is achieved through the joint control of both three-dimensional (3D) positions and 3D rotations of antennas or antenna arrays. Notably, 6DMA has been shown to be able to leverage the knowledge of statistical CSI or user spatial distribution to adapt to large-scale channel variations and thereby enhance the multiuser communication performance \cite{shao2025tutorial,shao20256d,shao20256dma}. Moreover, leveraging directional sparsity, channel estimation methods for arbitrary 6DMA positions and rotations were developed in \cite{direct}. Further advancing this architecture, a hierarchically-tunable 6DMA framework was proposed in \cite{hua2025hierarchically}, which pursues a sequential optimization of the arrays' global spherical positioning and individual local orientation tuning, thereby significantly reducing the computational complexity while improving the system performance. Unlike \cite{shi20256DMA}, which studies circular-track antenna movement for cell-free massive MIMO, this work targets sectorized BSs with directional sector gains and jointly optimizes sector rotation and cross-sector antenna allocation. It further develops an angular-domain traffic model, closed-form optimal allocation, and asymptotic rate-gain analysis.

It is also worth noting that existing fixed-antenna BSs widely employ sectorized antenna arrays to achieve effective coverage with high directional gains, termed as sectorized BS \cite{6824752}. Specifically, high-gain directional antennas are employed at the BS to divide its coverage area or cell into orthogonal sectors \cite{dai2013overview}. By concentrating radiated energy into narrow main lobes and suppressing sidelobes, such sector antennas achieve effective spatial isolation between adjacent sectors to suppress co-channel interference \cite{athley2013increased}. This leads to improved signal-to-interference-plus-noise ratio (SINR), denser frequency reuse, and enhanced cell-edge throughput. However, traditional sectorized BSs still lack the capability of flexible antenna movement in terms of positioning and/or rotation offered by MA, FAS, and 6DMA. Therefore, it is appealing to improve sectorized BSs to incorporate such new spatial flexibility for throughput enhancement. Despite their promising potential, the new reconfigurable antenna designs, such as MA/FAS/6DMA, are not compatible with existing sectorized BS architecture, thus requiring fundamental changes of the BS hardware and design.  

Motivated by the above, we propose a sector-constrained 6DMA-inspired BS architecture, termed the flexible-sector BS, which augments conventional sectorized BSs with two slow-timescale reconfiguration degrees of freedom: common sector rotation and cross-sector physical antenna allocation. In particular, antenna modules are constrained to move along a circular track, and their configuration is adapted to long-term angular traffic statistics rather than instantaneous CSI. This design is fundamentally different from conventional dynamic/adaptive sectorization, where sector boundaries or beam directions may be adjusted but the antenna resources associated with each sector are typically fixed. By jointly optimizing sector rotation and physical antenna allocation, the proposed flexible-sector BS redistributes sector-level array gains according to non-uniform user distributions. At the same time, it avoids the hardware and control complexity of general 6DMA architectures with arbitrary antenna positioning and rotation, thereby providing a low-complexity intermediate architecture from fixed-sector BSs to fully reconfigurable 6DMA systems. The main results of this paper are outlined as follows.

\begin{itemize}
	\item We propose an analytical and optimization framework for evaluating the performance of flexible-sector BS systems. Specifically, an angular-domain long-term traffic load model based on a 2D finite homogeneous Poisson point process (FHPPP) is introduced to capture angular user clustering and hotspots. This model greatly facilitates the design of sector rotation and antenna allocation across sectors, while distinguishing from conventional distance-based stochastic models such as those built upon Poisson cluster process (PCP) and Matérn hard-core Poisson point process (PPP) \cite{andrews2016primer,he2021unified,li2024geometric,li2024analysis,haenggi2012stochastic,stoyan1987stochastic}.

	\item Under the angular-domain user distribution model, we study uplink transmission in a single-cell system where multiple single-antenna users communicate with a multi-antenna flexible-sector BS. Each sector employs ZF reception and channel-inversion power control to ensure a common user rate. We then analyze the cell-average sum rate as a function of sector rotation and cross-sector antenna allocation.

	\item Based on the derived rate expression, we formulate a sum-rate maximization problem under total-antenna and per-user minimum-rate constraints. A semi-closed-form solution reveals the load-aware allocation structure, while a globally optimal antenna-allocation algorithm, combined with a one-dimensional rotation angle search, solves the original non-convex problem. The analysis shows that the optimal antenna allocation increases with sector traffic load, and larger gains arise under more uneven user distributions. The asymptotic per-user rate gap between the most and least favorable user distributions approaches $\log_2(B)$ bps/Hz as $N \to \infty$, where $B$ is the number of sectors.

	
	\item Finally, we present numerical results to validate the analysis under representative angular-domain user distributions. The results show that the proposed flexible-sector BS with joint sector rotation and antenna allocation outperforms benchmark schemes with only one adaptation mechanism. The performance gain becomes more pronounced when users are more spatially clustered or when the number of sectors increases. We also evaluate the proposed design under non-ideal sector antenna patterns, including finite-side-lobe and 3GPP-like horizontal patterns, showing that the flexible-sector BS remains effective under practical radiation conditions. 
	
\end{itemize}

The rest of this paper is organized as follows. Section~\ref{sec:systemModel} presents the flexible-sector BS architecture, the angular-domain traffic model, and the resulting channel models. Section~\ref{sec:achRate} characterizes the average rates achievable by users based on the ZF receiver. Section~\ref{sec:probFormulation} formulates the average sum-rate maximization problem. Section~\ref{sec:Soln} presents a two-step algorithm for solving the formulated problem. Section~\ref{sec:effectsPara} investigates the effects of several key factors on the users' sum rate. Section~\ref{sec:numRes} validates the analytical insights and then evaluates the robustness of the proposed flexible-sector BS to angular traffic clustering and non-ideal antenna patterns. Finally, Section~\ref{sec:conclusion} concludes this paper.

\textit{Notations}: Scalars are denoted by lower-case letters, vectors by bold-face lower-case letters, and matrices by bold-face upper-case letters. $\boldsymbol{I}$ and $\boldsymbol{0}$ denote an identity matrix and an all-zero matrix, respectively, with appropriate dimensions. For a square matrix $\boldsymbol{S}$, $\boldsymbol{S}^{-1}$ denotes its inverse (if $\boldsymbol{S}$ is full-rank), and $[\boldsymbol{S}]_{k,k}$ denotes its $k$-th diagonal element. 
For a matrix $\boldsymbol{M}$ of arbitrary size, $\boldsymbol{M}^{\mathsf{H}}$, $\boldsymbol{M}^{\mathsf{T}}$, and $\mathbb{E}[\boldsymbol{M}]$ denote its conjugate transpose, transpose, and statistical expectation, respectively. The distribution of a circularly symmetric complex Gaussian (CSCG) random vector with mean $\boldsymbol{0}$ and covariance matrix $\boldsymbol{\Sigma}$ is denoted by $\mathcal{CN}(\boldsymbol{0},\boldsymbol{\Sigma})$, and $\sim$ stands for “distributed as". $\mathbb{C}^{x\times y}$ denotes the space of ${x\times y}$ complex matrices. 
$\mathbb{C}$ denotes the set of complex numbers. 
$\mathbb{Z}_{++}$ denotes the set of positive integers. 
$\mathbb{Z}_{+}$ denotes the set of non-negative integers. 
$\mathbb{R}_{+}$ denotes the set of non-negative real numbers. 
$\Vert \boldsymbol{x} \Vert$ denotes the Euclidean norm of a complex vector $\boldsymbol{x}$. $\vert \cdot \vert$ denotes the cardinality of a set. $\overline{x}$ represents the mean value of a real number $x$. The operation $+$ is defined as $x^{+}\triangleq\max\{x,0\}$. 
$\binom{\cdot}{\cdot}$ denotes the binomial coefficient. 

\section{System Model}\label{sec:systemModel}

\begin{figure*}[t!]
	\centering
	\subfigure[Flexible-sector BS with $B=3$ sectors.]{
		\begin{minipage}[t]{0.47\textwidth}
			\includegraphics[width=0.9\linewidth]{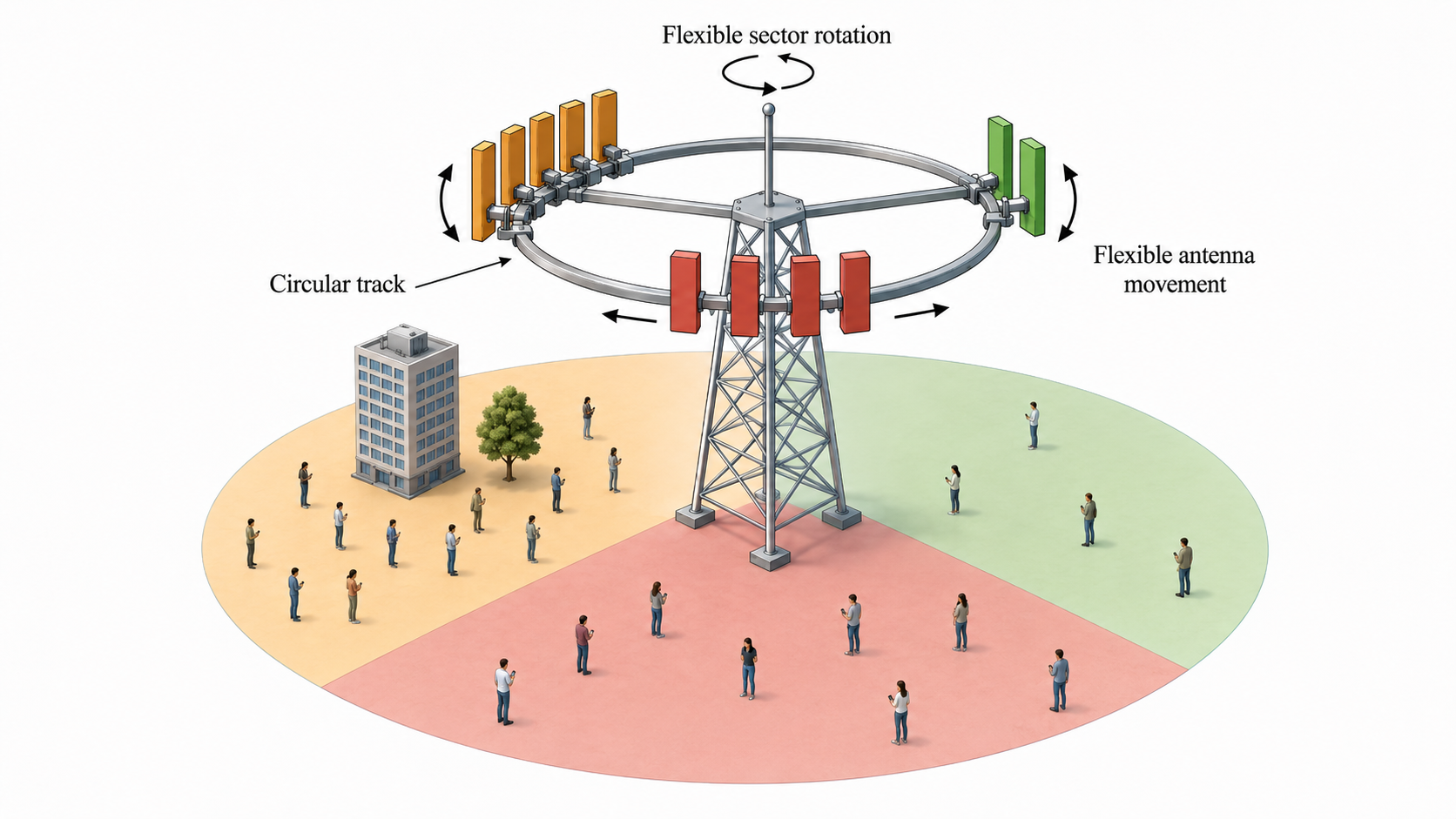}
			\label{fig:systemModel_phy}
		\end{minipage}
	} 
	\subfigure[Angular zone-based user distribution with $c=4$ zones in each sector.]{
		\begin{minipage}[t]{0.47\textwidth}
			\includegraphics[width=0.7\linewidth]{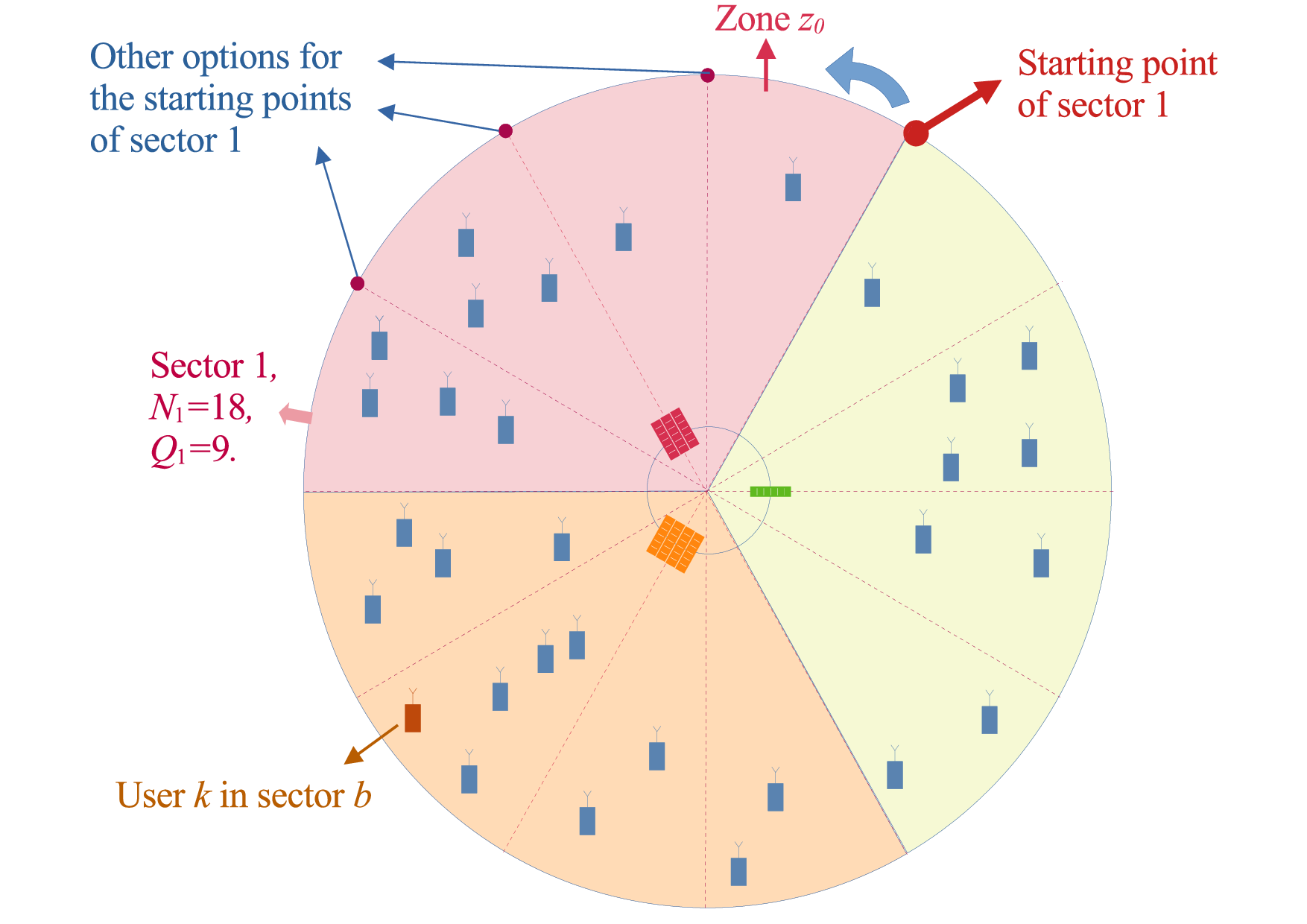}
			\label{fig:systemModel_ana}
		\end{minipage}
	}
	\caption{Illustration of the flexible-sector BS and angular zone-based user distribution.}
	\label{fig:systemModel}
\end{figure*}

In this section, we first present the flexible-sector BS architecture. We then introduce the angular-domain user distribution model and the corresponding channel model. 
 
\vspace{-1em}
\subsection{Flexible-Sector BS}

We consider the uplink transmission in a single-cell system consisting of a multi-antenna BS and multiple single-antenna users, as illustrated in Fig.~\ref{fig:systemModel_phy}. 
We assume that the BS is equipped with $N\geq1$ antennas in total, denoted by set $\mathcal{N}\triangleq\{1,2,\cdots,N\}$. 
The antennas are allocated to $B$ sectors, denoted by set $\mathcal{B}\triangleq\{1,2,\cdots,B\}$, with $B\geq1$. 
Each sector is assumed to have an equal coverage in azimuth angle due to the use of directional antennas, which is given by angle $\Phi\triangleq\frac{2\pi}{B}$. 
Different from conventional BS with a fixed number of antennas in each sector, we propose a flexible-sector BS where the antennas at the BS can be flexibly allocated over different sectors. Specifically, the set of antennas in sector $b\in\mathcal{B}$ is denoted by set $\mathcal{N}_{b},\mathcal{N}_{b}\subset\mathcal{N}$, with $N_{b}\triangleq\vert \mathcal{N}_{b} \vert$ denoting the number of antennas allocated to sector $b$. 
As shown in Fig.~\ref{fig:systemModel_phy}, the antennas are placed along a circular track with radius $d$ at the BS, while those for each sector are placed at the angular center of each sector. In particular, a minimum inter-antenna spacing (guard distance) is enforced to mitigate mutual coupling and spatial correlation among adjacent antennas. The antennas can move freely along the circular track from one sector to another such that the number of antennas in each sector can be flexibly adjusted. 
Moreover, we assume that the angular centers of all antennas in different sectors can rotate flexibly along the circular track by a common azimuth angle, leading to corresponding rotations of orthogonal coverage regions of all sectors in azimuth angle. 
We set the center of the circular track as the reference position of the BS. 
The users are assumed to be randomly distributed on the ground within the circular cell of radius $D$ and centered at the BS location.

To avoid interference across different sectors, we assume that the antennas in each sector have a directional radiation pattern (in the case of $B>1$), which is modeled as a truncated function in azimuth angle.\footnote{When $B=1$ and $\Phi=2\pi$, the introduced system in Fig.~\ref{fig:systemModel} reduces to a single-sector BS with omnidirectional antennas.} 
Specifically, the azimuth beamwidth of each antenna is set to be $\Phi$. 
Thus, the corresponding antenna gain in azimuth angle $\phi$ is given by \cite{balanis2016antenna,lyu2018uav}
\begin{equation}\label{eq:antennaG}
	A(\phi) = \begin{cases}
		\frac{2\pi}{{\Phi}},\quad &~ -\frac{\Phi}{2}\leq\phi\leq\frac{\Phi}{2}, \\ 0, \quad &~\text{otherwise}.
	\end{cases}
\end{equation}
Furthermore, we assume that each user is equipped with an omnidirectional antenna of unit gain.

To facilitate the joint design of antenna allocation among sectors and their common rotation, we propose an angular-domain approach to model the user distribution. 
As shown in Fig.~\ref{fig:systemModel_ana}, the whole cell is equally divided into $Z$ zones in terms of azimuth angle, denoted by set $\mathcal{Z}\triangleq\{1,2,\cdots,Z\}$ with $Z\geq1$. 
For convenience, we assume $Z=c B$, where $c$ is a positive integer. Accordingly, each sector $b\in\mathcal{B}$ consists of $c$ zones, which are denoted by set $\mathcal{Z}_{b}$, expressed as
\begin{equation}\label{eq:zoneInSector}
	\mathcal{Z}_{b} = \left\{
	\operatorname{wrap}_{Z}\!\left(z_0+(b-1)c+\ell\right)
	\right\}_{\ell=0}^{c-1},
\end{equation}
with $\operatorname{wrap}_{Z}(n)\triangleq 1+[(n-1)\bmod Z]$, and $z_{0}\in\left\{1,\cdots, c\right\}$ denoting the index of the first zone in $\mathcal{Z}_{1}$ (assuming that all the zones and sectors are arranged in a counter-clockwise manner). 
Thus, the common rotation of all sectors is controlled by adjusting $z_{0}$, which is referred to as the common rotation index, as shown in Fig.~\ref{fig:systemModel_ana}.

\subsection{Angular-Domain Long-Term Traffic Load Model}

We characterize the long-term angular traffic load by modeling the user locations in each angular zone as a two-dimensional (2D) finite homogeneous Poisson point process (FHPPP) \cite{haenggi2012stochastic}, denoted by $\mathcal{K}_{z}$ with spatial density $\lambda_{z}$, $\forall z\in\mathcal{Z}$. 
Thus, all users in the cell are denoted by set $\mathcal{K}=\bigcup_{z=1}^{Z }\mathcal{K}_{z}$.\footnote{In this work, we consider the channel inversion based power control to equalize the users' effective channel gains with the BS in each zone/sector, and as a result, our proposed design relies on the average number of users in each zone regardless of their locations in the zone. Nevertheless, the results obtained in this paper are also applicable to other (non-uniform) user distributions in each zone provided that the corresponding average numbers of users over angular zones are given.} 
For analytical convenience, we assume that the number of users in each zone $z$ is given, which corresponds to the average number of users derived from the user point process for that zone $z\in\mathcal{Z}$.

Let $K_{z}$ and $K$ denote the average number of users in zone $z$ and that in the whole cell, respectively. 
Based on the FHPPP assumption, we have
\begin{equation}\label{eq:numOfUser_zone}
	{K}_{z} = \frac{1}{2}\psi D^{2} \lambda_{z},\quad  z\in\mathcal{Z},
\end{equation}
and 
\begin{equation}\label{eq:numOfUser}
	{K} = \sum_{z=1}^{Z} {K}_{z} = \sum_{z=1}^{Z} \frac{1}{2}\psi D^{2} \lambda_{z}, 
\end{equation}
where $\psi=\frac{2\pi}{Z}$ is the angular range of each zone. The proposed sector rotation and antenna allocation are optimized based on long-term angular traffic loads, rather than instantaneous user locations.

We assume that each zone contains at least one user, i.e., ${K}_{z}\geq1$, and $K < N$. 
Let $\mathcal{Q}_{b}(z_{0})\triangleq\bigcup_{z\in\mathcal{Z}_{b}}\mathcal{K}_{z}$ denote the set of users located in sector $b$, $b\in\mathcal{B}$, and $Q_{b}(z_{0})$ denote the average number of users in sector $b$, both of which depend on the corresponding zones of sector $b$ that vary with the sectors' common rotation (i.e., $z_{0}$). 
Thus, $Q_{b}(z_{0})$ is given by
\begin{equation}\label{eq:numOfUser_sector}
	{Q}_{b}(z_{0}) = \sum_{z\in\mathcal{Z}_{b}}{K}_{z} = \sum_{z=z_{0}+(b-1)c}^{z_{0}+bc-1} \frac{1}{2}\psi D^{2} \lambda_{z},  \quad  b\in\mathcal{B}.
\end{equation}
For convenience, we assume in this paper that ${Q}_{b}(z_{0}) \geq 1, \forall b\in\mathcal{B}$, regardless of $z_0$, which is practically valid in general. We use long-term average user numbers as deterministic integer loads for ZF design and antenna allocation. 


\subsection{Channel Model}

Last, we present the channel model between users and the BS. 
For the ideal sector pattern in \eqref{eq:antennaG}, the channel vector from user $k$ in sector $b$ to the BS is given by
\begin{equation}\label{eq:zone_Ch}
	\boldsymbol{h}_{k} = \sqrt{  B  \zeta_{k}}  \boldsymbol{g}_{k} , \quad k\in\mathcal{Q}_{b},
\end{equation}	
where $B$ accounts for the directional antenna gain in each sector, $\zeta_{k}$ denotes the average channel power gain from user $k$ to the reference point of the BS, and $\boldsymbol{g}_{k}\sim\mathcal{CN}(\boldsymbol{0},\boldsymbol{I}_{N_{b}})$ denotes the complex channel vector characterizing the small-scale Rayleigh fading between user $k$ and the antennas in sector $b$ of BS, $\mathcal{N}_{b}$. Let $\boldsymbol{G}_{b}=[\boldsymbol{g}_{1},\cdots,\boldsymbol{g}_{Q_{b}}]\in\mathbb{C}^{N_{b}\times Q_{b}}$ represent the small-scale fading channel matrix from all users in $\mathcal{Q}_{b}$ to the BS's receive antennas in $\mathcal{N}_{b}$.

The received signal at the BS in sector $b$ is given by
\begin{equation}
	\boldsymbol{y}_{b}  =  \sum_{k\in\mathcal{Q}_{b}}  \boldsymbol{h}_{k} x_{k} + \boldsymbol{\varsigma}_{b}  = \boldsymbol{H}_{b}\boldsymbol{x}_{b} + \boldsymbol{\varsigma}_{b} , \quad b\in\mathcal{B},
\end{equation}
where $x_{k}\in\mathbb{C}$ represents the transmitted information signal of user $k$ with average power $P_{k}$, 
$\boldsymbol{H}_{b} = [\boldsymbol{h}_{1},\cdots,\boldsymbol{h}_{Q_{b}}]\in\mathbb{C}^{N_{b}\times Q_{b}}$ represents the channel matrix from all users in $\mathcal{Q}_{b}$ to the BS's receive antennas in $\mathcal{N}_{b}$,\footnote{For notational convenience, we assume that the users in sector $b$ are indexed from $1$ to $Q_{b}$, and write $Q_b(z_{0})$ as $Q_b$ in short.}
$\boldsymbol{x}_{b}=[x_{1},\cdots,x_{Q_{b}}]^T\in\mathbb{C}^{Q_{b}\times1}$, and $\boldsymbol{\varsigma}_{b}\sim\mathcal{CN}(\boldsymbol{0},\delta^2\boldsymbol{I}_{N_{b}})$ denotes the additive white Gaussian noise (AWGN) at the BS receiver, with $\delta^2$ denoting the noise power. To achieve fair rate allocation among users, we consider the channel inversion power control at the users, such that the user transmit power is inversely proportional to the average channel power gain \cite{goldsmith2005wireless} with $P_{k}=\frac{P_{0}}{\zeta_{k}}$, where $P_{0}$ denotes the common power for all users in the cell. This ensures that the average received power is identical for users in the same sector.

\subsection{Practical Radiation Pattern and Antenna Positioning}
\begin{figure}[t!]
	\centering 
	\begin{minipage}[t]{0.47\linewidth}
		\includegraphics[width=\linewidth]{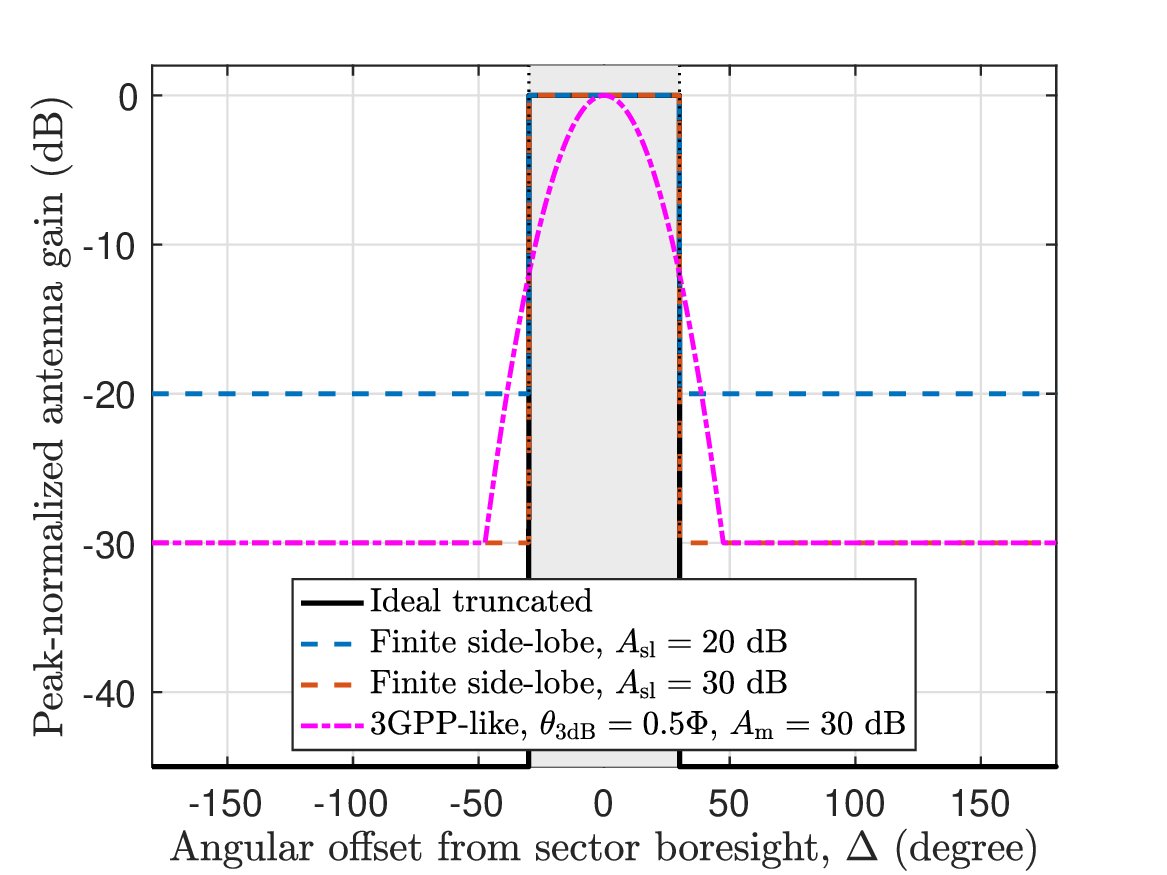} 
		\par\vspace{4pt}  
		\small (a) Cartesian representation versus the angular offset from the sector boresight.
		\label{fig:antennaPattern}
	\end{minipage} 
	\begin{minipage}[t]{0.47\linewidth}
		\includegraphics[width=0.8\linewidth]{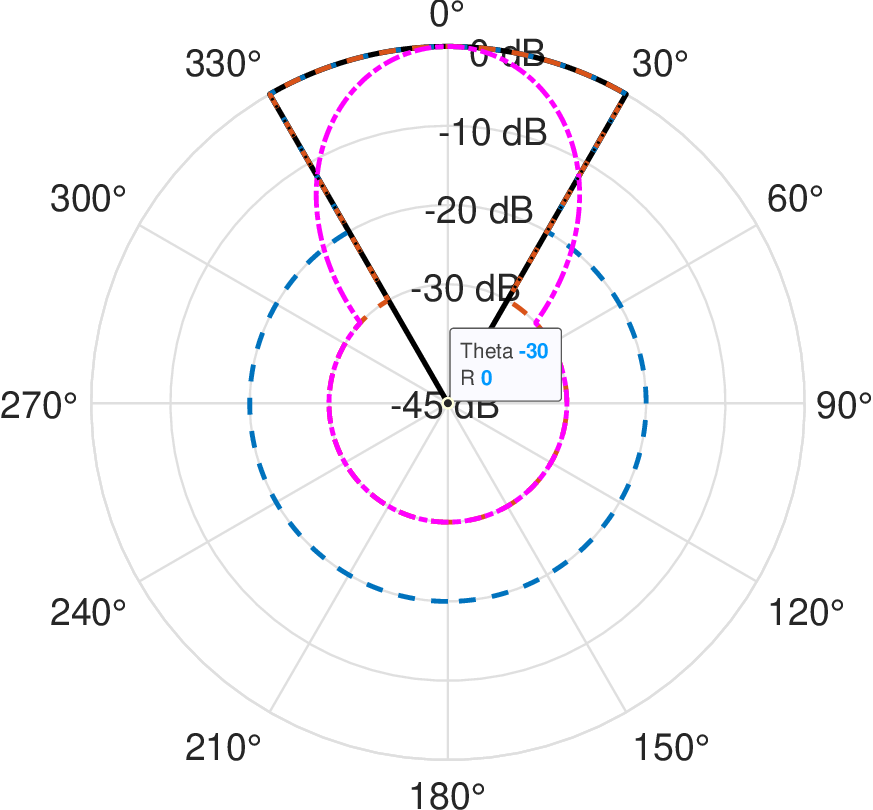}
		\par\vspace{4pt}  
		\small (b) Polar representation. 
		\label{fig:antennaPattern_polar}
	\end{minipage} 
	\caption{Antenna pattern illustration.}
	\label{fig:antennaPattern0}
\end{figure} 
The ideal sector pattern in \eqref{eq:antennaG} is adopted for analytical tractability, since it leads to orthogonal sector coverage and enables closed-form rate characterization. In practical sector antennas, however, the radiation power outside the main sector cannot be perfectly suppressed due to finite side-lobe levels, front-to-back leakage, and sector-boundary overlap. To evaluate the robustness of the proposed flexible-sector BS under practical radiation patterns, we also consider the following finite-side-lobe sector antenna model. 

Let $\varphi_k$ denote the azimuth angle of user $k$ with respect to the BS, and $\vartheta_b$ denote the boresight direction of sector $b$. The angular offset between user $k$ and sector $b$ is defined as
	\begin{equation}
		\Delta_{b,k} ={\rm wrap}_{\pi}\left(\varphi_k-\vartheta_b \right),
	\end{equation} 
	where ${\rm wrap}_{\pi}(\cdot)$ maps an angle into $[-\pi,\pi]$. The practical finite-side-lobe antenna gain is modeled as
	\begin{equation}	\label{eq:ASL_antenna}
		A_\eta(\Delta_{b,k})=
		\begin{cases}
			G_{\rm m}, & |\Delta_{b,k}|\leq \frac{\Phi}{2},\\
			\eta G_{\rm m}, & \text{otherwise},
		\end{cases}
	\end{equation}
	where $\eta=10^{-A_{\rm sl}/10}$ denotes the side-lobe leakage ratio with $A_{\rm sl}$ being the side-lobe attenuation in dB. The main-lobe gain $G_{\rm m}$ is chosen as
	\begin{equation}
		G_{\rm m}=\frac{1}{\frac{1}{B}+\eta\left(1-\frac{1}{B}\right)},
	\end{equation}
	such that the average antenna gain over the azimuth domain is normalized to one. When $\eta=0$, the above model reduces to the ideal sector pattern in \eqref{eq:antennaG}. Therefore, \eqref{eq:antennaG} can be regarded as an ideal upper-bound model, while $A_\eta(\Delta)$ captures practical inter-sector leakage caused by finite side lobes.

In addition, a 3GPP-like horizontal pattern is plotted for reference, whose normalized dB-domain gain is given by
	\begin{equation} \label{eq:3gpp_antenna}
		A_{{\rm 3GPP},{\rm dB}}(\Delta) = -\min\left\{	12\left(\frac{\Delta}{\theta_{3{\rm dB}}}\right)^2, A_m\right\}.
	\end{equation}

For practical evaluation with finite side lobes, the channel from user $k$ to the antennas in sector $b$ is modeled as
	\begin{equation}
		\mathbf {\tilde h}_{b,k}=	\sqrt{\zeta_k A_\eta(\Delta_{b,k})}\mathbf g_{b,k},\quad b\in\mathcal B,\ k\in\mathcal K,
	\end{equation}
	where $\mathbf g_{b,k}\sim\mathcal{CN}(\mathbf 0,\mathbf I_{N_b})$. Under the ideal case with $\eta=0$, $\mathbf {\tilde h}_{b,k}$ is nonzero only when user $k$ lies within the angular coverage of sector $b$, and the above model reduces to \eqref{eq:zone_Ch}. Under the non-ideal sector pattern, signals from users outside sector \(b\) may leak into the receive antennas of sector \(b\). Therefore, the received signal at sector \(b\) is evaluated as
	\begin{equation}
		\tilde{\boldsymbol{y}}_b
		=
		\sum_{k\in\mathcal K} 
		\mathbf {\tilde h}_{b,k}x_k
		+
		\boldsymbol\varsigma_b.
	\end{equation}

Moreover, the proposed flexible-sector BS operates on a slow timescale. Antenna modules mounted on a motorized circular rail are repositioned according to long-term angular traffic statistics to realize sector rotation and cross-sector antenna allocation. After reconfiguration, RF calibration and channel estimation are performed, and the antenna configuration remains fixed during data transmission, while ZF combining is updated using instantaneous uplink channel state information (CSI).

\section{Average Achievable Rate}\label{sec:achRate}

In this section, we first derive the achievable rate of user $k$ in sector $b$, and then obtain the sum rate achievable for all users served by the BS. In order for ZF combining to be feasible, it is assumed that the number of antennas $N_{b}$ in sector $b$ is always no smaller than that of users $Q_{b}$, i.e., $N_{b} \geq Q_{b}$. 
We assume that the channel coefficients between all users and their serving antennas in each sector are perfectly known at the BS. 
For the ideal antenna pattern, the received signals from all users in sector $b$ at its antennas are linearly combined, yielding
\begin{equation}
	\widehat{\boldsymbol{x}}_{b}  = \boldsymbol{W}_{b} ^{\mathsf{H}}\boldsymbol{y}_{b} , \quad b\in\mathcal{B},
\end{equation}
where $\boldsymbol{W}_{b} =[\boldsymbol{w}_{1},\cdots,\boldsymbol{w}_{Q_{b}}]$, with $\boldsymbol{w}_{k}\in\mathbb{C}^{N_{b}\times1},k\in\mathcal{Q}_{b},$ denoting the combining vector for user $k$ with $\Vert\boldsymbol{w}_{k}\Vert=1$.

In this work, for simplicity, we consider the ZF combining at each sector to completely eliminate the inter-user interference, i.e., $\boldsymbol{w}_{k}^{\mathsf{H}}\boldsymbol{g}_{j}=0,\forall j,k \in Q_b, j\neq k$.\footnote{The ZF receiver is adopted to obtain tractable results and useful insights. Our results are extendable to other types of linear receivers, such as linear minimum mean-square error (LMMSE) receiver and regularized ZF receiver \cite{zhang2025fluid} (see Section \ref{sec:numRes} for more details).  }
Specifically, we define 
\begin{equation}
	\overline{\boldsymbol{W}}_{b} = [\overline{\boldsymbol{w}}_{1},\cdots,\overline{\boldsymbol{w}}_{Q_{b}}] = \boldsymbol{G}_{b}(\boldsymbol{G}_{b}^{\mathsf{H}}\boldsymbol{G}_{b})^{-1}, \quad b\in\mathcal{B}.
\end{equation}
Thus, the ZF combining vector for user $k$ is given by
\begin{equation}\label{eq:ZF_vector}
	\boldsymbol{w}_{k} = \frac{\overline{\boldsymbol{w}}_{k}}{\Vert \overline{\boldsymbol{w}}_{k} \Vert} , \quad  k\in\mathcal{Q}_{b}.
\end{equation}

Accordingly, the signal-to-noise ratio (SNR) of user $k$ in sector $b$ is given by \cite{liu2019comp}
\begin{equation}
	\gamma_{b,k} =  \frac{P_{k}\vert \boldsymbol{w}_{k}^{\mathsf{H}}\boldsymbol{h}_{k}  \vert^{2} }{\delta^2} = \frac{P_{0}B}{[(\boldsymbol{G}_{b}^{\mathsf{H}}\boldsymbol{G}_{b})^{-1}]_{k,k}\delta^{2}}, \quad  k\in\mathcal{Q}_{b}.
\end{equation}
The average achievable rate of user $k$ in sector $b$ is given by
\begin{equation}
	{r}_{b,k} = \mathbb{E}\left[ \log_{2}\left( 1+ \gamma_{b,k} \right)\right], \quad  k\in\mathcal{Q}_{b} , 
\end{equation}
where the expectation is taken over the randomness of the small-scale Rayleigh fading channel coefficients.

Since the channel-inversion power control is applied within each sector, the average rates achievable for all users in the same sector are identical. 
Let ${r}_{b},b\in\mathcal{B}$, denote the (identical) achievable rate of each user in sector $b$, with ${r}_{b}=r_{b,k}, \forall k\in\mathcal{Q}_{b}$.
We have the following results to characterize the upper and lower bounds for rate $r_b$, denoted by ${r}_{b}^{(\rm u)}$ and ${r}_{b}^{(\rm l)}$, respectively.

\begin{lemma}\label{lem:bounds}
	The average achievable rate per user in sector $b$ satisfies the following inequalities:
	\begin{equation}\label{eq:bounds}
		{r}_{b}^{(\rm l)} \leq {r}_{b} \leq {r}_{b}^{(\rm u)}, \quad b\in\mathcal{B},
	\end{equation} 
	where 
	\begin{equation}\label{eq:upperB_r_b}
		{r}_{b}^{(\rm u)} =  \log_{2}\left( 1+  \mathbb{E}\left[ \frac{P_{k}\vert \boldsymbol{w}_{k}^{\mathsf{H}}\boldsymbol{h}_{k}  \vert^{2} }{\delta^2}\right] \right), \quad k\in\mathcal{Q}_{b},
	\end{equation}
	and	
	\begin{equation}\label{eq:lowerB_r_b}
		{r}_{b}^{(\rm l)} =  \log_{2}\left( 1+ \frac{P_{0}B}{\mathbb{E}\left[[(\boldsymbol{G}^{\mathsf{H}}_{b}\boldsymbol{G}_{b})^{-1}]_{k ,k }\right]\delta^{2}}\right) , \quad k\in\mathcal{Q}_{b}.
	\end{equation}
	More specifically, ${r}_{b}^{(\rm u)}$ in \eqref{eq:upperB_r_b} can be expressed as a function of the common rotation index, $z_{0}$, and the number of antennas allocated to sector $b$, $N_{b}$, which is given by
	\begin{equation}\label{eq:R_k_upper}
		{r}_{b}^{(\rm u)}(N_{b},z_{0}) = \log_{2}\left[ 1+B\gamma_{0} (N_{b}-Q_{b}(z_{0})+1)^{+}\right] , \quad b\in\mathcal{B},
	\end{equation}
	where $\gamma_{0}=\frac{P_{0}}{\delta^{2}}$; and ${r}_{b}^{(\rm l)}$ in \eqref{eq:lowerB_r_b} can be expressed as 
	\begin{equation}\label{eq:R_k_lower}
		{r}_{b}^{(\rm l)} (N_{b},z_{0}) = \log_{2}\left[ 1+ B \gamma_{0} (N_{b}-Q_{b}(z_{0}))^{+}\right] , \quad b\in\mathcal{B}.
	\end{equation}
\end{lemma}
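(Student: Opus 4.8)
The plan is two-pronged: first establish the sandwich inequality \eqref{eq:bounds} by applying Jensen's inequality to $r_b=\mathbb{E}[\log_2(1+\gamma_{b,k})]$ in its concave and convex directions, and then evaluate the two expectations appearing in \eqref{eq:upperB_r_b}--\eqref{eq:lowerB_r_b} in closed form by identifying the law of the diagonal element of the inverse Gram matrix $(\boldsymbol{H}_b^H\boldsymbol{H}_b)^{-1}$. The final forms \eqref{eq:R_k_upper}--\eqref{eq:R_k_lower} then follow by substitution, with the $(\cdot)^+$ operator absorbing the degenerate regimes.

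For the bounds, the upper bound is immediate: since $x\mapsto\log_2(1+x)$ is concave, Jensen gives $r_b\le\log_2(1+\mathbb{E}[\gamma_{b,k}])$, and inserting $\gamma_{b,k}=\tfrac{P_0}{\delta^2}|\boldsymbol{w}_k^H\boldsymbol{h}_k|^2$ yields \eqref{eq:upperB_r_b}. For the lower bound I would write the SNR as $\gamma_{b,k}=\tfrac{P_0}{\delta^2}\,Y^{-1}$ with $Y\triangleq[(\boldsymbol{H}_b^H\boldsymbol{H}_b)^{-1}]_{k,k}$, and observe that $y\mapsto\log_2(1+c\,y^{-1})$ is convex on $y>0$ for every $c>0$, which I would confirm by a direct second-derivative computation showing $\tfrac{d^2}{dy^2}\ln(1+c/y)=\tfrac{c(2y+c)}{[y(y+c)]^{2}}>0$. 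A second application of Jensen, now in the convex direction, gives $r_b=\mathbb{E}[\log_2(1+cY^{-1})]\ge\log_2(1+c\,\mathbb{E}[Y]^{-1})$, which is exactly \eqref{eq:lowerB_r_b}.

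The heart of the argument is to reduce both expectations to moments of a single statistical object. Using the channel model \eqref{eq:zone_Ch} together with the channel-inversion power control, the per-user large-scale gain $\zeta_k$ cancels, so that within a sector the effective channel is an i.i.d.\ $\mathcal{CN}(0,1)$ matrix $\boldsymbol{G}_b\in\mathbb{C}^{N_b\times Q_b}$ together with the common directional factor $B$. The relevant variable is then $X\triangleq 1/[(\boldsymbol{G}_b^H\boldsymbol{G}_b)^{-1}]_{k,k}$, and I would invoke the classical fact that for such a Gaussian matrix $X$ is Gamma-distributed with shape $N_b-Q_b+1$ and unit scale (equivalently $X\sim\tfrac12\chi^2_{2(N_b-Q_b+1)}$), which holds because $X$ equals the squared norm of the projection of the $k$-th column onto the orthogonal complement of the span of the remaining $Q_b-1$ columns. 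From this I read off $\mathbb{E}[X]=N_b-Q_b+1$, and, since the reciprocal is inverse-Gamma, $\mathbb{E}[X^{-1}]=1/(N_b-Q_b)$ whenever $N_b>Q_b$.

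Substituting these moments, the upper bound uses $\tfrac{P_0}{\delta^2}\mathbb{E}[|\boldsymbol{w}_k^H\boldsymbol{h}_k|^2]=B\gamma_0\,\mathbb{E}[X]=B\gamma_0(N_b-Q_b+1)$, giving \eqref{eq:R_k_upper}, while the lower bound uses $\mathbb{E}[Y]$, which after accounting for the $B\gamma_0$ scaling reproduces $B\gamma_0(N_b-Q_b)$ in \eqref{eq:R_k_lower}. The $(\cdot)^+$ then handles the boundary cases consistently: when $N_b<Q_b$ the ZF receiver is infeasible and both bounds collapse to zero, and when $N_b=Q_b$ the inverse-Gamma mean $\mathbb{E}[Y]$ diverges so that the lower bound is $\log_2(1+0)=0$, matching $(N_b-Q_b)^+=0$. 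I expect the main obstacle to be precisely this third step, namely pinning down the exact Gamma law of the ZF post-processing SNR and justifying the finiteness and value of $\mathbb{E}[Y]=\mathbb{E}\big[[(\boldsymbol{G}_b^H\boldsymbol{G}_b)^{-1}]_{k,k}\big]$, since the inverse-Gamma mean exists only for $N_b-Q_b\ge1$ and its divergence at $N_b=Q_b$ must be tracked carefully to legitimize the boundary behavior encoded by the $(\cdot)^+$ in \eqref{eq:R_k_upper}--\eqref{eq:R_k_lower}.
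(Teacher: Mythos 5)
Your proposal is correct and follows essentially the same route as the paper's proof: Jensen's inequality applied to the concave map $x\mapsto\log_{2}(1+ax)$ for the upper bound and the convex map $y\mapsto\log_{2}(1+b/y)$ for the lower bound, followed by evaluation of the two Wishart-related moments to obtain the closed forms. The only cosmetic difference is that you derive both moments from the single $\mathrm{Gamma}(N_{b}-Q_{b}+1,1)$ law of the ZF post-processing gain (the projection of $\boldsymbol{g}_{k}$ onto the orthogonal complement of the other users' channels), whereas the paper obtains $\mathbb{E}\left[\vert\boldsymbol{w}_{k}^{H}\boldsymbol{h}_{k}\vert^{2}\right]$ from the cited orthogonality/projection argument and $\mathbb{E}\left[[(\boldsymbol{H}_{b}^{H}\boldsymbol{H}_{b})^{-1}]_{k,k}\right]$ from the cited Wishart trace identity --- the same underlying facts in different packaging.
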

\begin{proof}
	See Appendix~\ref{append:Lemma1}. 
\end{proof}

\begin{figure}[t!] 
	\centering
	\includegraphics[width=0.6\linewidth]{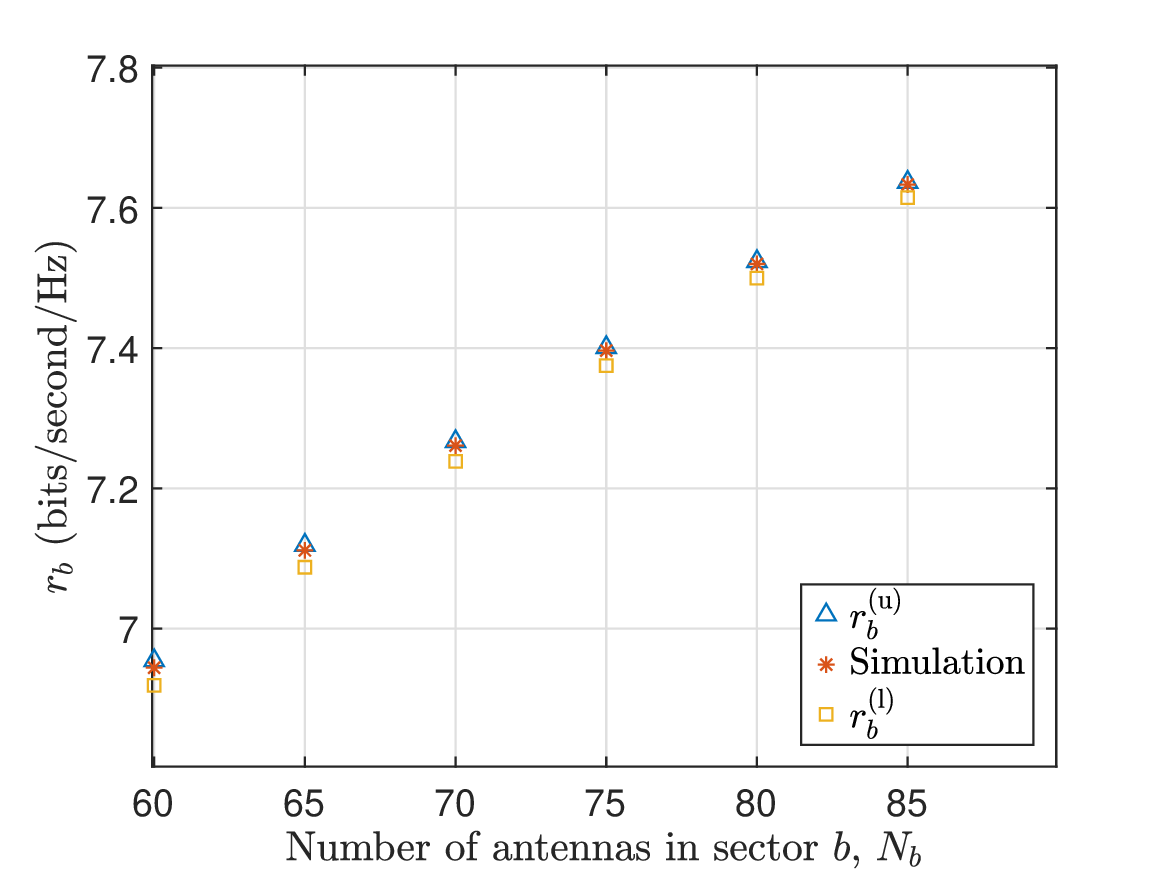} 
	\caption{Simulated user achievable rate versus the number of antennas in sector $b$.}
	\label{fig:bounds}
\end{figure}

From Lemma~\ref{lem:bounds}, it follows that the upper and lower bounds for the achievable rate of users in sector $b$, given in \eqref{eq:R_k_upper} and \eqref{eq:R_k_lower} respectively, are explicitly expressed in terms of the numbers of antennas and users in the sector. Moreover, it is observed from Lemma~\ref{lem:bounds} that the difference between the upper and lower bounds is only due to that between $[N_{b}-Q_{b}(z_{0})+1]^{+}$ and $[N_{b}-Q_{b}(z_{0})]^{+}$, and thus these bounds given in Lemma~\ref{lem:bounds} are generally very tight \cite{liu2019comp}, as shown in Fig.~\ref{fig:bounds}, especially when $(N_{b}-Q_{b})$ is much larger than $1$, which usually holds in practical massive MIMO systems. As such, we adopt the lower bound as the average achievable rate per user for sector $b$ for system design and optimization in the rest of this paper for obtaining the worst-case achievable rate. 
Accordingly, the lower bound of the sum rate of all users in sector $b$ is given by

\begin{equation}\label{eq:R_b}
	R_{b}(N_{b},z_{0}) \triangleq   {Q}_{b}(z_{0}) {r}_{b}^{(\rm l)}(N_{b},z_{0}), \quad b\in\mathcal{B}.
\end{equation} 
Finally, the lower bound of the sum rate of all users in all sectors of the cell is given by
\begin{equation}\label{eq:sumR_obj}
	\begin{split}
		R(\boldsymbol{n},z_{0})  = &~  \sum_{b=1}^{B} R_{b}(N_{b},z_{0}) \\ = &~ \sum_{b=1}^{B} {Q}_{b}(z_{0}) \log_{2}\left[ 1+ B \gamma_{0}(N_{b}-Q_{b}(z_{0}))^{+}\right] ,
	\end{split}
\end{equation}  
where $\boldsymbol{n}=[N_{1},\cdots,N_{B}]^{\mathsf{T}}\in\mathbb{Z}_{+}^{B}$ denotes the collective vector indicating the number of antennas allocated to each sector.

\section{Problem Formulation}\label{sec:probFormulation}

In this paper, we aim to maximize a lower bound on the average sum rate subject to per-user rate constraints served by the flexible-sector BS through the joint optimization of the antenna allocation $\boldsymbol{n}$ over different sectors and the common rotation index $z_{0}$ of all sectors. In the following, we use the lower bound given in \eqref{eq:sumR_obj} to represent the sum-rate of all users for convenience. The sum-rate maximization problem is thus formulated as  
\begin{subequations}
	\begin{alignat}{2}
		({\rm P}1): \quad \max_{\boldsymbol{n}\in\mathbb{Z}_{+}^{B},z_{0}\in\mathbb{Z}_{++}} ~ & ~~  R(\boldsymbol{n},z_{0})   \notag	\\ 
		\mathrm {s.t.}  &~~ \sum_{b=1}^{B}N_{b} \leq N,  \label{con:1-1}	\\ &~~ r_{b}^{(l)}(N_{b},z_{0}) \geq  \overline{r}, \forall b\in\mathcal{B}, \label{con:1-2} \\ &~~ 1\leq z_{0}\leq c, \label{con:1-3} 
	\end{alignat}
\end{subequations}
where \eqref{con:1-1} denotes the total antenna budget at the BS, and \eqref{con:1-2} represents the minimum rate constraints for all users with 
$\overline{r}>0$ denoting the minimum rate required for all users. 
$({\rm P}1)$ is an integer programming problem and can be optimally solved by the exhaustive search over $\boldsymbol{n}$ and $z_{0}$ in their joint discrete set. However, this approach fails to reveal useful insights into the optimal solution, and is computationally prohibitive (with complexity in the order of $O\left(c\binom{N-1}{B-1}\right)$) when $N$ and/or $B$ is large. Thus, in this paper, we propose an alternative approach to solve $({\rm P}1)$, by relaxing the discrete variables $N_{b},\forall b\in\mathcal{B},$ into continuous ones, as detailed in Section \ref{sec:Soln}. 

Before we proceed to solve $({\rm P}1)$, we note that to ensure the feasibility of problem $({\rm P}1)$, from \eqref{eq:R_k_lower} and \eqref{con:1-2} the following constraints need to be satisfied: 
\begin{equation}\label{eq:rho_bar}
	N_{b} \geq Q_{b}(z_{0}) + \frac{2^{\overline{r}}-1}{B\gamma_{0}} \triangleq  N_{b,\min} , \quad \forall   b\in\mathcal{B},
\end{equation}
which, together with the constraint \eqref{con:1-1} and the fact that $\sum_{b=1}^{B} Q_{b}(z_{0}) = K$, leads to the following constraint over $\overline{r}$:
\begin{equation}\label{eq:con_tau}
	\overline{r} \leq \log_{2}\left[1+\gamma_{0}(N-K)\right] \triangleq \overline{r}_{\max}.
\end{equation}
In the sequel of this paper, we replace the constraint \eqref{con:1-2} of $({\rm P}1)$ with its equivalent one in \eqref{eq:rho_bar}.


\section{Proposed Solution to Problem $({\rm P}1)$}\label{sec:Soln}

Since the feasible set of the common rotation of sectors $z_{0}$ consists of a finite number of discrete values, we address problem $({\rm P}1)$ by adopting a two-step approach. 
First, we determine the antenna allocation policy for $\boldsymbol{n}$ with any given $z_{0}$. 
Then, we perform a discrete search over the feasible set of $z_{0}$ to determine the overall solution to $({\rm P}1)$.

\vspace{-1em}
\subsection{Antenna Allocation Optimization with Given Common  Sector Rotation }
With given common rotation index $z_{0}$, the optimization of antenna allocation $\boldsymbol{n}$ in problem $({\rm P}1)$ is simplified as
\begin{subequations}
	\begin{alignat}{2}
		({\rm P}2):\quad  \max_{\boldsymbol{n}\in\mathbb{Z}_{+}^{B}} ~ & ~~  \sum_{b=1}^{B} Q_{b}\log_{2}\left[ 1+ B\gamma_{0}(N_{b}-Q_{b}) \right] \notag	\\ 
		\mathrm {s.t.}   &~~ \sum_{b=1}^{B}N_{b} \leq  N,  \label{con:2-1}	\\ &~~ 	N_{b} \geq Q_{b} + \frac{2^{\overline{r}}-1}{B\gamma_{0}}, \quad \forall   b\in\mathcal{B}.  \label{con:2-2} 
	\end{alignat}
\end{subequations}
Note that in the objective function of $({\rm P}2)$ we omit $z_0$ in $Q_b(z_0)$ since $z_0$ is fixed. In addition, we have removed the `$+$' operation in the objective function of $({\rm P}2)$ due to constraint \eqref{con:2-2}. 

Next, we relax $N_{b},\forall b\in\mathcal{B},$ to continuous variables. 
Thus, problem $({\rm P}2)$ is reformulated as
\begin{subequations}
	\begin{alignat}{2}
		({\rm P}3):\quad  \max_{\boldsymbol{n}\in\mathbb{R}_{+}^{B}} ~ & ~~  \sum_{b=1}^{B} Q_{b}\log_{2}\left[ 1+ B\gamma_{0}(N_{b}-Q_{b}) \right]  \notag	\\ 
		\mathrm {s.t.}  &~~  \sum_{b=1}^{B}N_{b}\leq  N,  \label{con:3-1}		\\ &~~  N_{b} \geq N_{b,\min} ,\quad \forall b\in\mathcal{B}. \label{con:3-2}
	\end{alignat}
\end{subequations}
Thus, problem $({\rm P}2)$ is transformed from a non-convex integer program to a convex optimization problem over continuous $N_b$'s.

It can be verified that problem $({\rm P}3)$ is a convex optimization problem, since the objective function is concave and the constraints are linear.  
\begin{proposition}\label{prop:closedForm}
	The optimal semi-closed-form solution to $({\rm P}3)$ is given by
	\begin{equation}\label{sol:antennaAllo}
		N_{b}^{\star} = \max\left\{{Q_{b}}\left( 1+\frac{1}{\nu{\ln2}} \right) - \frac{1}{B\gamma_{0}},N_{b,\min}\right\}, \quad b\in\mathcal{B}, 
	\end{equation}
	where $ {\nu}$ is a unique constant that satisfies the following equation: 
	\begin{equation}\label{eq:nu_constraint}
		\sum_{b=1}^{B} \max\left\{ {Q_{b}}\left( 1+\frac{1}{\nu{\ln2}} \right) - \frac{1}{B\gamma_{0}},N_{b,\min}\right\}=N.
	\end{equation}
\end{proposition}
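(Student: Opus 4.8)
The plan is to solve $({\rm P}3)$ directly through its Karush--Kuhn--Tucker (KKT) conditions, which are both necessary and sufficient here: the objective is concave, the constraints \eqref{con:3-1} and \eqref{con:3-2} are affine, and Slater's condition holds whenever $\sum_{b}N_{b,\min}<N$, so strong duality applies. The resulting stationarity equations will be solved sector-by-sector, and the single scalar multiplier tied to the total-antenna budget will play the role of the water level $\nu$.

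First I would form the Lagrangian, attaching a multiplier $\nu\geq0$ to the budget constraint \eqref{con:3-1} and multipliers $\mu_{b}\geq0$ to the lower-bound constraints \eqref{con:3-2}:
\[
\mathcal{L} = \sum_{b=1}^{B} Q_{b}\log_{2}\!\left[1+B\gamma_{0}(N_{b}-Q_{b})\right] - \nu\!\left(\sum_{b=1}^{B}N_{b}-N\right) + \sum_{b=1}^{B}\mu_{b}(N_{b}-N_{b,\min}).
\]
Setting $\partial\mathcal{L}/\partial N_{b}=0$ yields the stationarity condition
\[
\frac{Q_{b}B\gamma_{0}}{\ln2\,\bigl[1+B\gamma_{0}(N_{b}-Q_{b})\bigr]} = \nu-\mu_{b}, \quad b\in\mathcal{B}.
\]

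Next I would carry out the complementary-slackness case split. When the lower bound is inactive ($\mu_{b}=0$), solving the stationarity equation for $N_{b}$ gives the interior value $Q_{b}\bigl(1+\tfrac{1}{\nu\ln2}\bigr)-\tfrac{1}{B\gamma_{0}}$. When this value would fall below $N_{b,\min}$, the bound must be active, i.e.\ $N_{b}=N_{b,\min}$; a short sign check confirms that the corresponding $\mu_{b}$ recovered from the stationarity equation is indeed nonnegative in exactly this regime, so the KKT system is consistent. Combining both cases reproduces the $\max\{\cdot,N_{b,\min}\}$ expression in \eqref{sol:antennaAllo}. To pin down $\nu$, I would note that the objective is strictly increasing in every $N_{b}$, so the budget \eqref{con:3-1} is tight at the optimum, giving $\sum_{b}N_{b}^{\star}=N$, which is precisely \eqref{eq:nu_constraint}.

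The remaining point is uniqueness of $\nu$. I would argue that each summand in \eqref{eq:nu_constraint} is continuous and nonincreasing in $\nu$, and strictly decreasing wherever its first argument is active; hence the left-hand side is a continuous, monotone function of $\nu$ that decreases from $+\infty$ as $\nu\to0^{+}$ down to $\sum_{b}N_{b,\min}$ as $\nu\to\infty$ (using $N_{b,\min}\geq Q_{b}$). Since feasibility of $({\rm P}3)$ forces $\sum_{b}N_{b,\min}\leq N$, there is a unique $\nu$ for which the sum equals $N$. The main obstacle is the careful bookkeeping of the case split---verifying that the $\max$ form is genuinely KKT-consistent and that the water-level equation is strictly monotone where it matters---while the differentiation and the intervening algebra are routine.
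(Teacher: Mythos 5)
Your proposal is correct and follows essentially the same route as the paper's Appendix~B: form the Lagrangian of $({\rm P}3)$, apply the KKT conditions, and split on complementary slackness of the lower-bound multipliers $\mu_b$ to obtain the water-filling form $\max\{Q_b(1+\tfrac{1}{\nu\ln 2})-\tfrac{1}{B\gamma_0},\,N_{b,\min}\}$. Your added justifications --- that the budget constraint is tight because the objective is strictly increasing in each $N_b$, and that $\nu$ is unique by monotonicity of the left-hand side of \eqref{eq:nu_constraint} --- are details the paper's proof leaves implicit, and they correctly support the proposition's claim of uniqueness.
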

\begin{proof}
	See Appendix~\ref{append:Lagrange}.
\end{proof}

The value of $\nu$ can be numerically found by a simple bisection search based on \eqref{eq:nu_constraint}. From \eqref{eq:rho_bar} and \eqref{sol:antennaAllo}, it is worth noting that $N_b^{\star}$ is a monotonically increasing function of $Q_b$, which is intuitively expected since more antennas need to be allocated to sector $b$ when there are more users in it for achieving higher rates. An illustration for $N_b^{\star}$ given in \eqref{sol:antennaAllo} is provided in Fig.~\ref{fig:water-fillingSoln}.

\begin{figure}[t!] 
	\centering
	\includegraphics[width=0.6\linewidth]{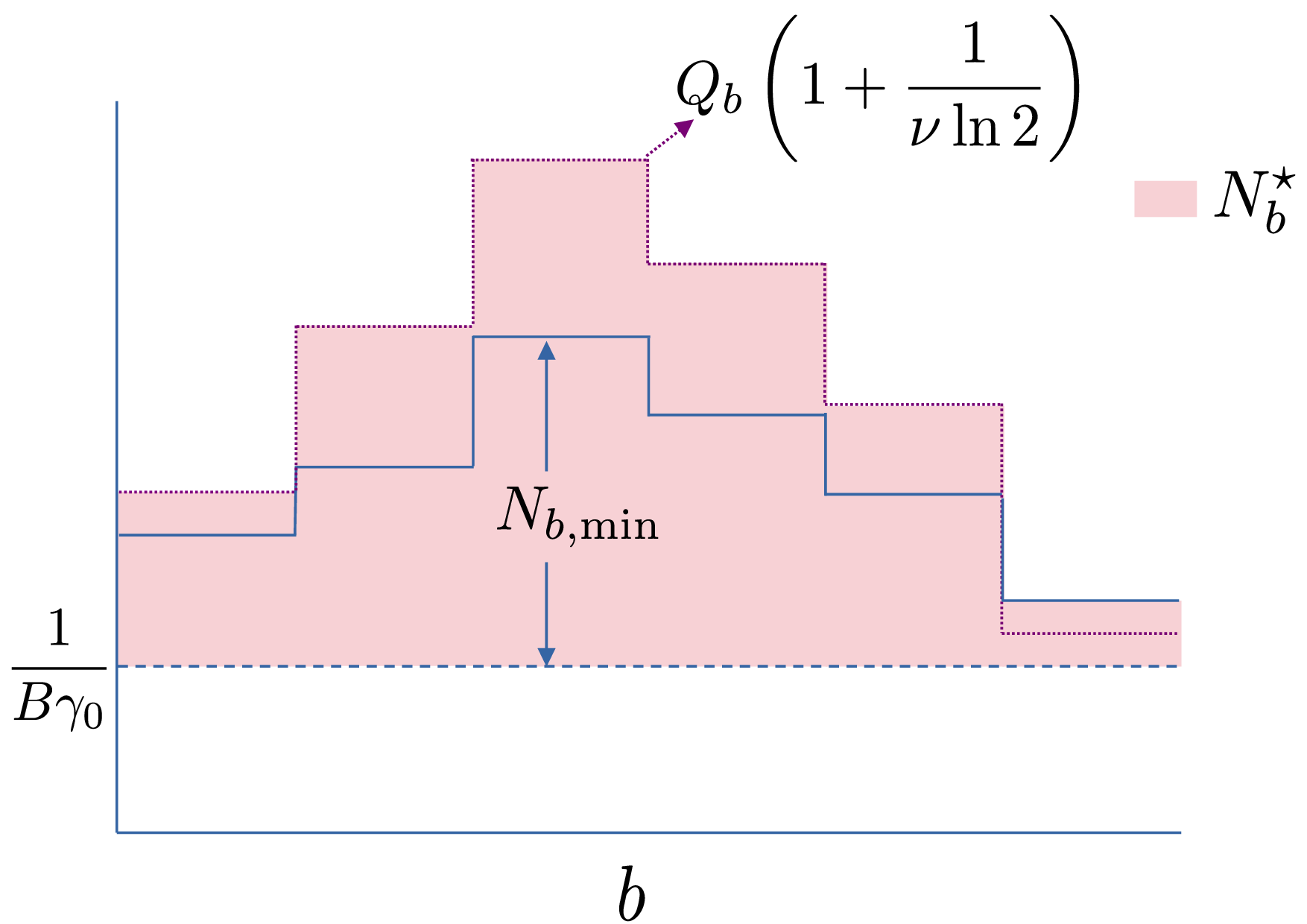} 
	\caption{Illustration of the solution given in \eqref{sol:antennaAllo}.}
	\label{fig:water-fillingSoln}
\end{figure} 

Although Proposition \ref{prop:closedForm} provides useful structural insights into the load-aware antenna allocation, its solution is generally non-integer. Instead of rounding the relaxed solution, we next solve the original integer allocation problem $({\rm P}2)$ directly. By exploiting the diminishing marginal-gain property of the sector-wise rate functions, we develop a greedy allocation procedure and prove that it achieves the globally optimal integer solution for any given common rotation index $z_0$. Instead of directly applying the floor operation, which may violate the minimum-rate constraint and leave some antennas unused, we construct a feasible integer antenna allocation as follows. For a given common rotation index $z_0$, the minimum integer number of antennas required by sector $b$ is given by
\begin{equation}
	L_b(z_0)=\left\lceil Q_b(z_0)+\frac{2^{\bar r}-1}{B\gamma_0}	\right\rceil, \quad b\in\mathcal B .
\end{equation}
If $\sum_{b=1}^{B}L_b(z_0)>N$, then the corresponding $z_0$ is infeasible. Otherwise, we initialize $\hat N_b(z_0)=L_b(z_0)$, $\forall b\in\mathcal B$, and allocate the remaining antennas one by one. The marginal sum-rate gain of assigning one additional antenna to sector $b$ is given by
\begin{equation}\label{eq:opt_nb_}
	\Delta_b(\hat N_b)
	=
	Q_b(z_0)
	\log_2
	\frac{
		1+B\gamma_0\left(\hat N_b+1-Q_b(z_0)\right)
	}{
		1+B\gamma_0\left(\hat N_b-Q_b(z_0)\right)
	}.
\end{equation}
The marginal gain in \eqref{eq:opt_nb_} satisfies a diminishing-return property as shown below.
\begin{proposition}
	The marginal rate gain $\Delta_b(N_b)$ is strictly decreasing with $N_b$.
\end{proposition}

\textit{Proof:} Let $a=B\gamma_0>0$. Treating $N_b$ as a continuous variable, we have
\begin{equation}
	\frac{\partial \Delta_b(N_b)}{\partial N_b}
	=
	-\frac{Q_ba^2}
	{\ln 2\,[1+a(N_b-Q_b)][1+a(N_b-Q_b+1)]}<0. \notag
\end{equation}
Therefore, $\Delta_b(N_b+1)<\Delta_b(N_b)$, which proves the diminishing marginal-gain property. \hfill $\blacksquare$

Thus, at each step, one antenna is assigned to the sector with the largest marginal gain, i.e.,
\begin{equation}
	b^\star
	=
	\arg\max_{b\in\mathcal B}
	\Delta_b(\hat N_b),
	\quad
	\hat N_{b^\star}(z_0)
	\leftarrow
	\hat N_{b^\star}(z_0)+1.
\end{equation}
This process continues until $\sum_{b=1}^{B}\hat N_b(z_0)=N$. The resulting integer antenna allocation is denoted by
\begin{equation} \label{eq:N_b_star}
	N_b^{\star}(z_0)=\hat N_b(z_0),\quad \forall b\in\mathcal B.
\end{equation} 
\begin{remark}\label{rmk:opt_soln}
	For any given feasible common rotation index $z_0$, the marginal-gain-based antenna allocation in \eqref{eq:N_b_star} yields a globally optimal integer solution to problem $({\rm P}2)$.
\end{remark}
According to Remark \ref{rmk:opt_soln}, for every feasible $z_0$, Algorithm \ref{alg:soln} obtains the globally optimal integer antenna allocation. Since the feasible set of $z_0$ is finite and Algorithm \ref{alg:soln} exhaustively evaluates all $z_0\in\{1,\cdots,c\}$, selecting the rotation index yielding the largest objective value gives the globally optimal solution to $({\rm P}1)$.

\vspace{-1em}

\begin{algorithm}[t]
	\caption{Proposed solution for problem $({\rm P}1)$}
	\label{alg:soln}
	\begin{algorithmic}[1]
		\Require Minimum rate requirement $\bar r$ and total antenna number $N$.
		\For{$z_0 = 1,\ldots,c$}
		\State Calculate $Q_b(z_0)$, $\forall b\in\mathcal B$, using \eqref{eq:numOfUser_sector}.
		\State Set $L_b(z_0)=\left\lceil Q_b(z_0)+\frac{2^{\bar r}-1}{B\gamma_0}\right\rceil$, $\forall b\in\mathcal B$.
		\If{$\sum_{b=1}^{B}L_b(z_0)>N$}
		\State Mark $z_0$ as infeasible and continue.
		\EndIf
		\State Initialize $\hat N_b(z_0)=L_b(z_0)$, $\forall b\in\mathcal B$.
		\While{$\sum_{b=1}^{B}\hat N_b(z_0)<N$}
		\State Compute
		$\Delta_b(\hat N_b)=Q_b(z_0)\log_2\frac{1+B\gamma_0(\hat N_b+1-Q_b(z_0))}{1+B\gamma_0(\hat N_b-Q_b(z_0))}$,  $\forall b\in\mathcal B $.
		\State $b^\star=\arg\max_{b\in\mathcal B}\Delta_b(\hat N_b)$.
		\State $\hat N_{b^\star}(z_0)\leftarrow \hat N_{b^\star}(z_0)+1$.
		\EndWhile
		\State Set $N_b^\star(z_0)=\hat N_b(z_0)$, $\forall b\in\mathcal B$.
		\State Calculate $R(z_0)$ using \eqref{eq:sumR_objZ}.
		\EndFor
		\State $z_0^\star=\arg\max_{z_0\in\{1,\ldots,c\}}R(z_0)$.
		\State\Return $z_0^\star$ and $\mathbf n^\star(z_0^\star)$.
	\end{algorithmic}
\end{algorithm}


\subsection{Optimization of Common Rotation Index }\label{sec:optSecRot}

By substituting $N_{b}^{\star} (z_{0})$ given in \eqref{eq:N_b_star} into the objective function of problem $({\rm P}1)$, we obtain the users' maximum sum rate with given $z_0$ as 
\begin{equation}\label{eq:sumR_objZ}
	R(z_{0})  = \sum_{b=1}^{B} {Q}_{b}(z_{0}) \log_{2}\left\{1+ B \gamma_{0}[{N_{b}}^{\star}(z_{0})-Q_{b}(z_{0})] \right\}  .
\end{equation}  
Then, the optimal common rotation index can be obtained by a one-dimensional exhaustive search over $z_0\in\{1,\ldots,c\}$, and is denoted by $z_0^{\star}$. Given $z_0^{\star}$, the corresponding integer antenna allocation for problem $({\rm P}1)$ is obtained as $N_b^{\star}(z_0^{\star})$, $b\in\mathcal{B}$. The overall procedure for solving problem $({\rm P}1)$ is summarized in Algorithm~\ref{alg:soln}.

With a direct implementation, the greedy integer allocation for each given $z_0$ has complexity $O(BN)$, since each antenna assignment requires evaluating the marginal gains of all $B$ sectors. Therefore, the overall complexity of Algorithm~\ref{alg:soln} is $O(cBN)$. By maintaining the marginal gains with a max-heap, the complexity of the greedy allocation for each $z_0$ can be reduced to $O(N\log B)$, leading to an overall complexity of $O(cN\log B)$.

\vspace{-1em}
 
\section{Effects of Key Factors on Users' Sum Rate }\label{sec:effectsPara}

In this section, we use the solution obtained by solving problem $({\rm P}1)$ to analyze the effects of several key system factors on the users' maximum sum rate given in \eqref{eq:sumR_objZ}, including the required minimum rate of users $\overline{r}$, the user distribution in terms of the numbers of users in different sectors, i.e., $Q_{b},b\in\mathcal{B},$ and the number of sectors $B$. Note that we omit $z_0$ in the terms involving it in \eqref{eq:sumR_objZ} in this section, since the derived results apply for any given $z_0$, including the optimal $z_0^{\star}$ for problem $({\rm P}1)$. In addition, we assume continuous antenna allocation over sectors for ease of analysis.  

\vspace{-1em}
\subsection{Minimum Rate of Users }

In this subsection, we study the impact of the required minimum rate of users $\overline{r}$ on the solution of problem $({\rm P}1)$ and the resulting users' maximum sum rate. 

From \eqref{sol:antennaAllo}, it follows that, if the following conditions are satisfied:
\begin{equation}\label{eq:con:N_b}
	{Q_{b}}\left( 1+\frac{1}{\nu{\ln2}} \right) - \frac{1}{B\gamma_{0}}\geq N_{b,\min}, \quad \forall b\in\mathcal{B},
\end{equation}
we obtain
\begin{equation} \label{eq:Nsemiclosed}
	N_{b}^{\star} = {Q_{b}}\left( 1+\frac{1}{\nu{\ln2}} \right) - \frac{1}{B\gamma_{0}}, \quad  b\in\mathcal{B}.
\end{equation}
Moreover, from \eqref{eq:nu_constraint} and under the same conditions in \eqref{eq:con:N_b}, we derive a closed-form expression for $\nu$ as 
\begin{equation}\label{eq:nu_closed_form}
	\nu = \frac{K}{\left( N-K+\frac{1}{\gamma_{0}}\right) \ln 2}.
\end{equation}
By substituting $\nu$ in \eqref{eq:nu_closed_form} into \eqref{eq:Nsemiclosed}, we obtain
\begin{equation}\label{eq:optimalN_closed}\
	N_{b}^{\star} =\frac{Q_{b} }{K}\left( N+\frac{1}{\gamma_{0}}\right) - \frac{1}{B\gamma_{0}}, \quad b\in\mathcal{B}. 
\end{equation}
Note that \eqref{eq:optimalN_closed} holds if and only if $N_{b,\min}\leq N_{b}^{\star}$ for all $b$'s. Thus, from \eqref{eq:rho_bar}, we can show  
\begin{equation}\label{eq:con:r}
	\overline{r} \leq \log_{2}\left( \frac{BQ_{b}[(N-K)\gamma_{0}+1]}{K} \right), \quad \forall b\in\mathcal{B},
\end{equation}
which can be equivalently written as 
\begin{equation}
	\overline{r} \leq \log_{2}\left( \frac{BQ_{b,\min}[(N-K)\gamma_{0}+1]}{K} \right) \triangleq \overline{r}_{0}. 
\end{equation}
Accordingly, when ${\overline{r}} \leq \overline{r}_{0}$, the users' maximum sum rate in \eqref{eq:sumR_objZ} is given by 
\begin{equation}\label{eq:sumR_Case1}
	R = \sum_{b=1}^{B} {Q}_{b} \log_{2}\left[B \gamma_{0}\frac{Q_{b} }{K}\left( N-K+\frac{1}{\gamma_{0}}\right)  \right] .
\end{equation}

To draw more useful insights, in the rest of this section, we assume that $\overline{r}$ is sufficiently small such that ${\overline{r}} \leq \overline{r}_{0}$ holds, which is usually the case in practice (e.g., massive MIMO systems) with $N\gg K$, and thus the antenna allocation and sum rate of users are given in closed form in \eqref{eq:optimalN_closed} and \eqref{eq:sumR_Case1}, respectively.

\vspace{-1em}
\subsection{User Distribution }\label{sec:ImpactOfUserDist}

To characterize the effect of user distribution or equivalently the numbers of users over sectors, $Q_b$'s, on the users' maximum sum rate in \eqref{eq:sumR_Case1}, we formulate a new optimization problem to maximize the sum rate given in \eqref{eq:sumR_Case1} by assuming that $Q_b$'s can be freely allocated over sectors subject to the given total number of users, $K$. Note that we assume that the antenna allocation is optimal for any given $Q_b$'s as given in \eqref{eq:optimalN_closed}.  For analytical convenience, we relax $Q_{b},\forall b\in\mathcal{B}$, to continuous values, and thus the problem is formulated as\footnote{In \eqref{eq:sumR_Case1}, the required minimum rate for all users is guaranteed, and thus there is no constraint on the minimum rate of users in problem $({\rm P}4)$. }
\begin{subequations}
	\begin{alignat}{2}
		({\rm P}4):\quad \max_{\boldsymbol{q}\in\mathbb{R}_{+}^{B}} ~ & ~~  \sum_{b=1}^{B}
		Q_{b} \log_{2}\left[ \frac{ Q_{b} B\gamma_{0} }{K} \left( N-K+\frac{1}{\gamma_{0}}\right) \right]  \notag	\\ 
		\mathrm {s.t.}  &~~  \sum_{b=1}^{B} Q_{b} = K, \label{con:append-1} \\ &~~ Q_{b} \geq 1,\forall b\in\mathcal{B}, \label{con:append-2}
	\end{alignat}
\end{subequations}
where $\boldsymbol{q}\triangleq[Q_{1},\cdots,Q_{B}]^{\mathsf{T}}\in\mathbb{R}_{+}^{B}$ denotes the collective vector indicating the number of users in each sector, and \eqref{con:append-2} is required for the condition \eqref{con:1-2} in problem $({\rm P}1)$ to be valid.

The objective function in problem $({\rm P}4)$ is a convex function. Thus, problem $({\rm P}4)$ is a non-convex optimization problem and difficult to solve in general. However, by exploiting the special structure of $({\rm P}4)$, i.e., the objective function consists of identical terms in terms of $Q_b$'s, the optimal solution for problem $({\rm P}4)$ can be derived, which is given by
\begin{equation}\label{eq:opt_Qb_sub}
	\begin{cases}
		Q_{i}^{\star} =& K-B+1, \quad  \exists i\in\mathcal{B}, \\ Q_{j}^{\star} =&1, \quad  \forall j\neq i, j\in\mathcal{B}.
	\end{cases}
\end{equation} 
Thus, the corresponding antenna allocation is given by
\begin{equation}\label{eq:opt_Qb_Nb_sub}
	\begin{cases}
		N_{i}^{\star} =& \frac{K-B+1}{K}\left( N+\frac{1}{\gamma_{0}}\right) - \frac{1}{B\gamma_{0}}, \quad  
		\\ N_{j}^{\star} =& \frac{N-N_{i}^{\star}}{B-1}, \quad  \forall j\neq i, j\in\mathcal{B}.
	\end{cases}
\end{equation}
From \eqref{eq:opt_Qb_sub} and \eqref{eq:opt_Qb_Nb_sub}, we observe that it is optimal to have users located in one sector only as much as possible, and so is the number of antennas allocated to this sector. In practical scenarios, condition \eqref{con:append-2} can be safely omitted since its main purpose is to ensure the feasibility of problem $({\rm P}4)$ from a mathematical standpoint. Thus, we define the most favorable user distribution for attaining the maximum sum rate with the flexible-sector BS as 
\begin{equation}\label{eq:opt_Qb}
	\begin{cases}
		Q_{i} =K, \quad & \exists i\in\mathcal{B}, \\ Q_{j} =0, \quad & \forall j\neq i, j\in\mathcal{B}.
	\end{cases}
\end{equation}
With the optimal user distribution given in \eqref{eq:opt_Qb}, the corresponding optimal antenna allocation is given by
\begin{equation}\label{eq:opt_Qb_Nb}
	\begin{cases}
		N_{i} = N, \quad & 
		 \\ N_{j} =0, \quad & \forall j\neq i, j\in\mathcal{B},
	\end{cases}
\end{equation}
and from \eqref{eq:sumR_obj} the resulting maximum sum rate of all users is given by\footnote{For brevity, we henceforth refer to the worst-case sum rate as the achievable sum rate.}
\begin{equation}\label{eq:sumR_max}
	R_{\max} = K \log_{2}\left[ 1 + B \gamma_{0}  (N-K) \right] .
\end{equation}

Moreover, to obtain more insights, we formulate a problem to minimize the sum rate of all users (instead of maximizing it in $({\rm P}4)$), which is given by
\begin{subequations}
	\begin{alignat}{2}
		({\rm P}5):\quad \min_{\boldsymbol{q}\in\mathbb{R}_{+}^{B}} ~ & ~~  \sum_{b=1}^{B}
		Q_{b} \log_{2}\left[ \frac{ Q_{b} B\gamma_{0} }{K} \left( N-K+\frac{1}{\gamma_{0}}\right) \right]  \notag	\\ 
		\mathrm {s.t.}  &~~  \sum_{b=1}^{B} Q_{b} = K, \label{con:5-1}  \\ &~~ Q_{b}  \geq 1,\forall b\in\mathcal{B}. 
	\end{alignat}
\end{subequations}
Since the objective function in problem $({\rm P}5)$ is convex and the constraints are linear, $({\rm P}5)$ is a convex optimization problem. By applying the Jensen's inequality, we can easily show that in $({\rm P}5)$, the minimum sum rate of all users is attained when the users are equally distributed among all sectors, leading to (assuming $K
\geq B$)
\begin{equation}\label{eq:worst_Qb}
	Q_{b}^{\star} =   \frac{K}{B} ,\quad \forall b\in\mathcal{B}.
\end{equation}
The corresponding antenna allocation is given by
\begin{equation}
	N_{b}^{\star} = \frac{N}{B} , \quad b\in\mathcal{B},
\end{equation}
and the resulting users' minimum sum rate is given by
\begin{equation}\label{eq:sumR_min}
	\begin{split}
		R_{\min} = &~ K\log_{2} \left[1+ B\gamma_{0}\left( \frac{N}{B} - \frac{K}{B}\right)  \right], \\ = &~ K\log_{2} \left[1+ \gamma_{0}(N - K) \right],
	\end{split}
\end{equation}
which is smaller than or equal to $R_{\max}$ in \eqref{eq:sumR_max}, while the equality holds when $B=1$.

Comparing \eqref{eq:sumR_max} and \eqref{eq:sumR_min}, we reveal that the minimum sum rate of all users occurs when users are uniformly distributed across all sectors according to \eqref{eq:worst_Qb}, whereas the maximum sum rate of all users is achieved when users are distributed according to \eqref{eq:opt_Qb}, i.e., all users are located in one sector with no users located in any other sectors (termed as the most favorable user distribution for flexible-sector BS). Thus, it can be inferred that the common rotation of sectors, i.e., $z_{0}^{\star}$, in problem $({\rm P}1)$, should be set to result in unevenly/sparsely distributed users over sectors as much as possible (in the ideal case, approaching to that given in \eqref{eq:opt_Qb}), so as to maximize the users' sum rate. Moreover, in problem $({\rm P}1)$, the sum rate of all users is upper-bounded by $R_{\max}$ given in \eqref{eq:sumR_max} and lower-bounded by $R_{\min}$ given in \eqref{eq:sumR_min}, regardless of the user spatial distributions over angular zones.

\begin{theorem} \label{thm:rateGain}
	From \eqref{eq:sumR_max} and \eqref{eq:sumR_min}, the asymptotic rate gain, defined as the gap between the maximum and minimum achievable rates per user, with the flexible-sector BS under the most favorable user distribution in \eqref{eq:opt_Qb} when $N$ goes to infinity (i.e., $N\gg B$) is given by
	\begin{equation}
		\lim_{N\rightarrow\infty} \left(\frac{R_{\max}}{K} - \frac{R_{\min}}{K} \right) = \log_{2} B,
	\end{equation} 
	in bits-per-second-per-Hertz (bps/Hz). 
\end{theorem}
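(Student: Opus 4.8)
The plan is to work directly from the two closed-form expressions already established, namely $R_{\max}=K\log_{2}[1+B\gamma_{0}(N-K)]$ in \eqref{eq:sumR_max} and $R_{\min}=K\log_{2}[1+\gamma_{0}(N-K)]$ in \eqref{eq:sumR_min}. Since both sum rates carry a common factor of $K$, the per-user rates $R_{\max}/K$ and $R_{\min}/K$ are simply the two logarithmic terms, and their difference can be consolidated into a single logarithm via the quotient rule. Thus I would first write
\begin{equation}
	\frac{R_{\max}}{K} - \frac{R_{\min}}{K} = \log_{2}\left[ \frac{1+B\gamma_{0}(N-K)}{1+\gamma_{0}(N-K)} \right].
\end{equation}
This reduces the entire theorem to evaluating a single limit of the bracketed ratio as $N\to\infty$.

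Next I would analyze the asymptotic behavior of that ratio. The key observation is that both numerator and denominator are affine in $N$ with leading (coefficient-of-$N$) terms $B\gamma_{0}N$ and $\gamma_{0}N$, respectively, while the remaining terms ($1-B\gamma_{0}K$ in the numerator and $1-\gamma_{0}K$ in the denominator) are constants independent of $N$. Factoring out $\gamma_{0}N$ from both, or equivalently dividing numerator and denominator by $N$, the constant and $K$-dependent contributions vanish in the limit, leaving
\begin{equation}
	\lim_{N\to\infty} \frac{1+B\gamma_{0}(N-K)}{1+\gamma_{0}(N-K)} = \frac{B\gamma_{0}}{\gamma_{0}} = B.
\end{equation}
By continuity of $\log_{2}(\cdot)$ on the positive reals, the limit passes inside the logarithm, yielding $\log_{2}B$ and completing the proof.

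There is essentially no hard step here: the result is a clean consequence of the structure of \eqref{eq:sumR_max} and \eqref{eq:sumR_min}, in which the multiplicative factor $B$ appears inside the $R_{\max}$ logarithm but not inside the $R_{\min}$ logarithm, and this factor is exactly what survives in the high-SNR (large-$N$) regime. The only point meriting a brief remark is the justification that the additive ``$1$'' and the $-\gamma_{0}K$ terms are asymptotically negligible relative to the $\gamma_{0}N$ term; this is immediate under the stated regime $N\gg B$ (and the standing assumption $N>K$), since then $\gamma_{0}(N-K)\gg 1$ and the ratio is dominated by its linear-in-$N$ parts. I would also note that the gain is independent of $\gamma_{0}$ and of the precise value of $K$, which reinforces the interpretation that the $\log_{2}B$ improvement stems purely from concentrating all users—and hence the full directional antenna gain of $B$—into a single sector under the most favorable distribution in \eqref{eq:opt_Qb}.
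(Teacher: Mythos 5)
Your proof is correct and follows exactly the route the paper intends: the theorem is stated as an immediate consequence of \eqref{eq:sumR_max} and \eqref{eq:sumR_min} (the paper gives no separate formal proof, only the intuitive array-gain explanation of $(N-K)$ versus $(N-K)/B$), and your computation of $\log_{2}\bigl[\tfrac{1+B\gamma_{0}(N-K)}{1+\gamma_{0}(N-K)}\bigr]\to\log_{2}B$ as $N\to\infty$ is precisely the missing direct calculation. Your closing remark correctly identifies the source of the factor $B$ in the ratio, matching the paper's own interpretation.
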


The rate gain given in Theorem~\ref{thm:rateGain} is explained as follows. When the users are distributed according to \eqref{eq:opt_Qb} within one sector only, the flexible-sector BS can allocate all the antennas to cover this sector by properly setting the sector rotation and moving all antennas from the other sectors to this sector. This results in a sector array gain after applying the ZF combining, which is equal to $(N-K)$ shown in \eqref{eq:sumR_max}. In contrast, when the users are equally distributed over a set of $B$ sectors given in \eqref{eq:worst_Qb}, the flexible-sector BS just needs to properly set its sector rotation (if needed) to cover these sectors, while the antennas are equally allocated to all sectors (similar to the conventional fixed-sector BS with equal antenna allocation over sectors serving users evenly distributed over sectors). This, however, results in a reduced array gain after ZF combining, which is equal to $(N/B-K/B)=(N-K)/B$ given in \eqref{eq:sumR_min}. Thus, the rate gain of the flexible-sector BS over the conventional fixed-sector BS critically depends on the user spatial distribution. 

\vspace{-1em}
\subsection{Number of Sectors }\label{sec:ImapctOfSectors}

Finally, we analyze the impact of $B$ on the users' maximum sum rate by considering non-uniformly distributed users in this subsection. 
We consider two scenarios under the same user distribution with identical sector rotation $z_{0}$ of the flexible-sector BS to investigate the system performance under different setups of $B$.
\begin{itemize}
	\item \textbf{Scenario I}: There are $B_{0}$ sectors. Let ${Q}_{b}^{(\rm I)}$ denote the number of users in sector $b,b\in\{1,\cdots,B_{0}\}$, for the given sector rotation $z_{0}$. From \eqref{eq:sumR_Case1}, the users' sum rate with optimal antenna allocation is given by
	\begin{equation}\label{eq:sumR_Scenario_I}
		R^{(\rm I)} = \sum_{b=1}^{B_{0}} {Q}_{b}^{(\rm I)} \log_{2}\left[B \gamma_{0}\frac{Q_{b}^{(\rm I)} }{K}\left( N-K+\frac{1}{\gamma_{0}}\right)  \right] .
	\end{equation}
	\item \textbf{Scenario II}: There are $2B_{0}$ sectors. Let ${Q}_{b}^{(\rm II)}$ denote the number of users in sector $b,b\in\{1,\cdots,2B_{0}\}$. Based on \eqref{eq:sumR_Case1}, the users' sum rate is given by
	\begin{equation}\label{eq:sumR_Scenario_II}
		R^{(\rm II)} = \sum_{b=1}^{2B_{0}} {Q}_{b}^{(\rm II)} \log_{2}\left[2B_{0} \gamma_{0}\frac{Q_{b}^{(\rm II)} }{K}\left( N-K+\frac{1}{\gamma_{0}}\right)  \right] .
	\end{equation}
\end{itemize} 
Note that the sector rotation $z_0$ in Scenario II satisfies that for any  $j=2i-1,i\in\{1,\cdots,B_{0}\}$, $Q_{j}^{(\rm II)}+Q_{j+1}^{(\rm II)}=Q_{i}^{(\rm I)},\forall i,j$. Then, from \eqref{eq:sumR_Scenario_I} and \eqref{eq:sumR_Scenario_II}, it follows that 
\begin{itemize}
	\item If $Q_{j}^{(\rm II)} = Q_{j+1}^{(\rm II)},\forall i,j,$ $R^{(\rm I)}=R^{(\rm II)}$.
	\item Otherwise, if there exists $ i,j$, such that $Q_{j}^{(\rm II)} \neq Q_{j+1}^{(\rm II)}$, then $R^{(\rm I)}<R^{(\rm II)}$, which can be shown similarly as the solution of problem $({\rm P}4)$.
\end{itemize} 
\begin{remark}\label{rmk:effectOfB}
	Based on the above results, it is evident that if the users are non-uniformly distributed, a larger number of sectors generally tend to enhance the sum rate of users. This is expected since both higher directional antenna gain and higher flexibility in antenna allocation over sectors are offered. This enhancement, however, is accompanied by an increase in hardware cost and implementation complexity.
\end{remark}

\vspace{-1em}
\subsection{Comparison with Conventional BS Architectures}

In this subsection, we compare the achievable sum rates of three types of BSs: the non-sectorized BS with $B=1$, the conventional sectorized BS with $B>1$ and $N_{b}=\frac{N}{B},\forall b\in\mathcal{B}$, and the proposed flexible-sector BS with $B>1$ and $N_{b}^{\star}\sim O(Q_{b}), b\in\mathcal{B}$. The achievable sum rates of these BS architectures are given by
\begin{itemize}
	\item[a)] Non-sectorized BS:
	\begin{equation} \label{eq:perf_non-sectorized}
		R^{(\rm n)} = K\log_{2}[1+\gamma_{0}(N-K)].
	\end{equation}
	\item[b)] Conventional sectorized BS:\footnote{Assuming that $Q_{b}<\frac{N}{B},\forall b\in\mathcal{B}.$}
	\begin{equation}
		R^{(\rm s)} = \sum_{b=1}^{B} Q_{b}\log_{2}\left[ 1+B\gamma_{0}\left( \frac{N}{B}- Q_{b} \right) \right] .
	\end{equation}
	\item[c)] The proposed flexible-sector BS:
	\begin{equation}
		R^{(\rm f)} = \sum_{b=1}^{B} Q_{b}\log_{2}[1+B\gamma_{0}\left(  {N}_{b}^{\star}(z_{0}^{\star}) - Q_{b}(z_{0}^{\star}) \right) ].
	\end{equation}
\end{itemize}
Compared with the non-sectorized and conventional sectorized BSs, the proposed flexible-sector BS preserves the directional antenna gain $B$ while adaptively redistributing the excess spatial degrees of freedom ${N}_{b}^{\star}(z_{0}^{\star}) - Q_{b}(z_{0}^{\star})$ to the sector-level traffic load.
%

\vspace{-1em}

\section{Numerical Results}\label{sec:numRes}

In this section, we provide numerical results to validate the analytical insights and evaluate the robustness of the proposed flexible-sector BS. We focus on three aspects: the impact of common sector rotation and the number of sectors, the performance gain achieved by jointly optimizing sector rotation and antenna allocation under different angular-domain user distributions, and the robustness of the proposed design to non-ideal sector antenna radiation patterns.

Unless otherwise specified, the cell radius is $D=100$ m, and the angular domain is divided into $Z=30$ zones. The normalized SNR is $\gamma_{0}=0$ dB and the minimum user rate requirement is $\overline{r}=5$ bps/Hz. The proposed flexible-sector BS jointly optimizes the common rotation index and antenna allocation according to Algorithm \ref{alg:soln}. To provide a comprehensive evaluation, we consider the following benchmarks:
\begin{itemize}
	\item Load-aware antenna allocation (Antenna allocation only): The common rotation index is fixed with $z_{0}=1$. Accordingly, the antenna resources are flexibly allocated among sectors, i.e., $N_{b}$'s are optimized based on \eqref{eq:N_b_star}.
	\item Adaptive sectorization (Sector rotation only): The antennas are equally allocated over sectors, i.e., $N_{b}=\frac{N}{B},\forall b\in\mathcal{B}$. The common rotation index $z_{0}$ is optimized via the exhaustive search in Section \ref{sec:optSecRot}.
	\item Conventional fixed-sector architecture: Both the common rotation index $z_{0}$ and the antenna allocation are fixed, with $z_{0}=1$ and $N_{b}=\frac{N}{B},\forall b\in\mathcal{B}$.
\end{itemize}

\vspace{-1em}

\subsection{Validation of Analytical Results}


We first consider two representative angular-domain user distributions, as illustrated in Fig. \ref{fig:UserDists} with an average total of $K=50$ users. In angular zone $z$, the number of users in each realization follows $\mathrm{Poisson}(\lambda_z)$ distribution, and the users are uniformly distributed within the corresponding zone. For user distribution I, ($\lambda_z=3$) for ($z=16,\ldots,25$) and ($\lambda_z=1$) otherwise, resulting in a quasi-uniform distribution. For user distribution II, most of the traffic is concentrated in zones 16–25, with $\lambda_{z}=[4,5,6,8,7,5,4,3,3,3]$, whereas only a small background load is present in zones 1 and 11. This produces a highly clustered angular user distribution. These two distributions allow us to examine how angular-domain traffic heterogeneity affects the benefit of flexible sector reconfiguration.

Fig. \ref{fig:perf_z0} shows the average sum rate versus the common rotation index $z_0$ under the two representative user distributions for different numbers of sectors $B$. The sum rate varies periodically with $z_0$, since sector rotation changes the grouping of angular zones and hence the sector-level user loads $Q_b(z_0)$. The performance variation becomes more pronounced for the clustered distribution, where a hotspot can either be concentrated within one sector or split across adjacent sectors depending on the sector boundaries. Moreover, a larger $B$ generally provides a higher maximum sum rate under non-uniform traffic, which is consistent with Remark \ref{rmk:effectOfB}.

Fig. \ref{fig:comp} compares the proposed flexible-sector BS with the three benchmarks versus the total number of antennas $N$. The proposed design consistently achieves the highest sum rate under both user distributions, demonstrating the benefit of jointly adapting sector boundaries and antenna resources. The antenna-allocation-only scheme provides the second-best performance, indicating that traffic-aware antenna redistribution is the dominant source of gain, while sector rotation provides an additional improvement by creating more favorable sector-level traffic loads. Importantly, the performance gain of the proposed design is substantially larger under the clustered distribution, directly supporting the analysis in Section \ref{sec:ImpactOfUserDist} that the flexible-sector architecture is particularly beneficial under highly non-uniform user distributions.

\begin{figure}[h!]
	\centering
		\begin{minipage}[t]{0.45\linewidth}
			\includegraphics[width=0.9\linewidth]{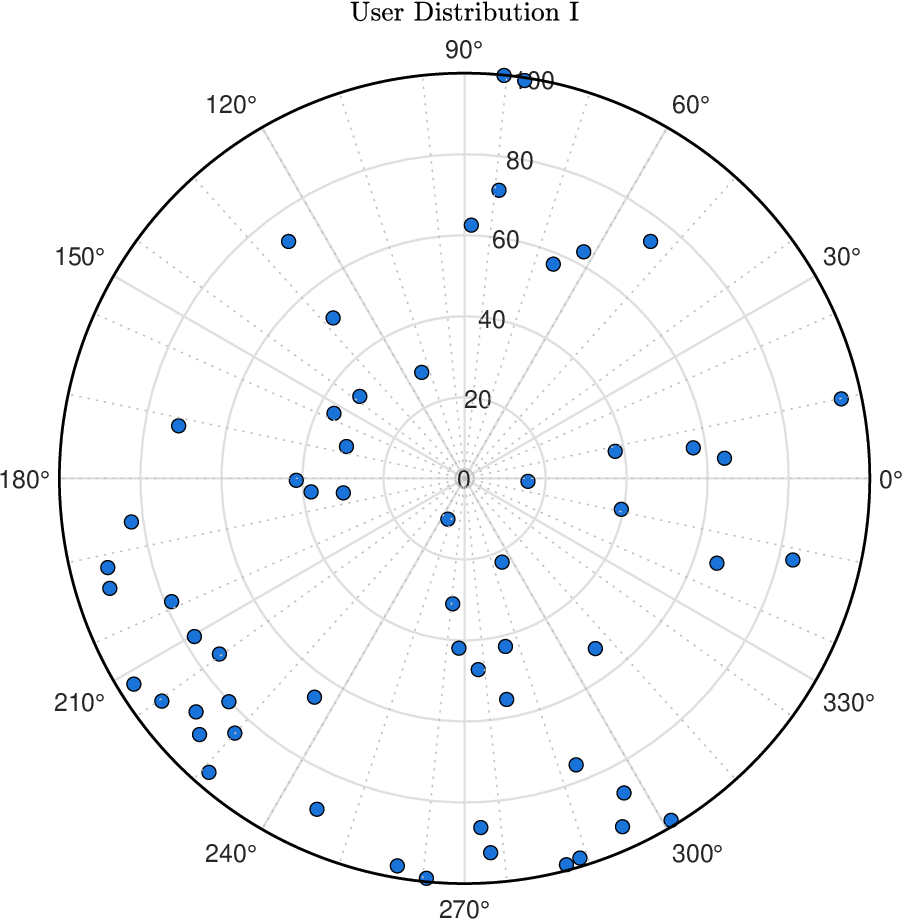} 
			\par\vspace{4pt}  
			\small (a) User distribution I.
			\label{fig:UserDist_TypeI_polar}
		\end{minipage} 
		~~
		\begin{minipage}[t]{0.45\linewidth}
			\includegraphics[width=0.9\linewidth]{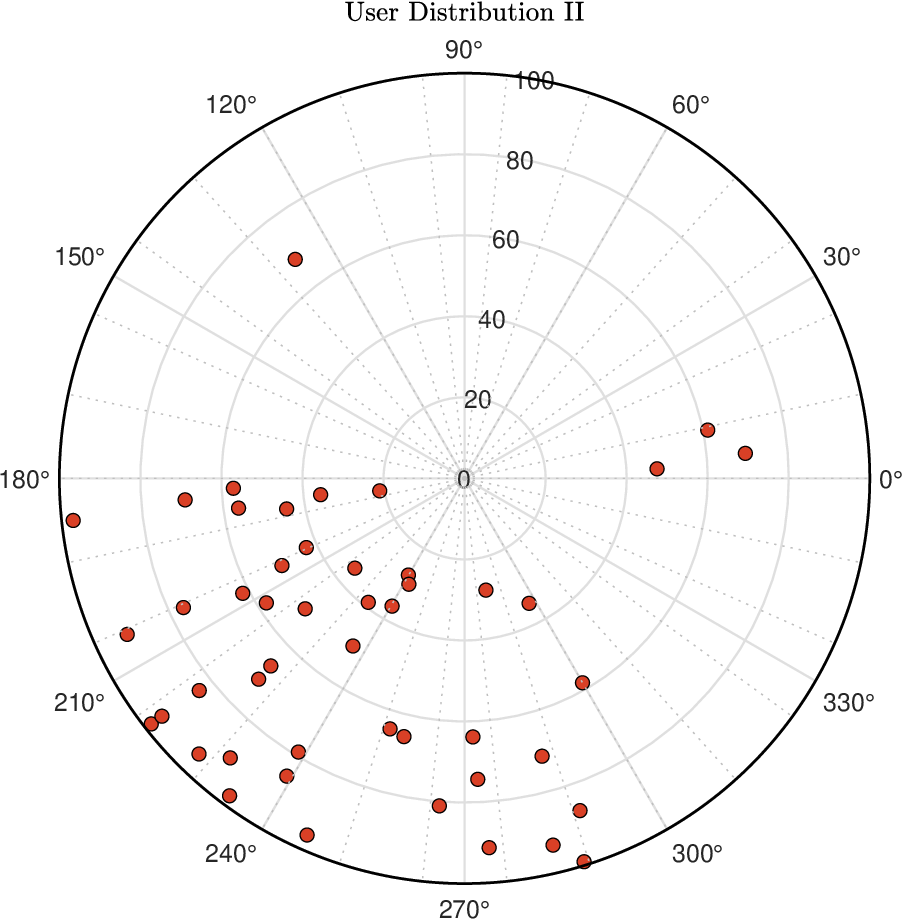}
			\par\vspace{4pt}  
			\small (b) User distribution II.
			\label{fig:UserDist_TypeII_polar}
		\end{minipage} 
		\caption{A realization of the two types of user distributions.}
		\label{fig:UserDists}
	\end{figure}
	
	\begin{figure}[h!]
		\centering 
		\begin{minipage}[t]{0.48\linewidth}
			\includegraphics[width=\linewidth]{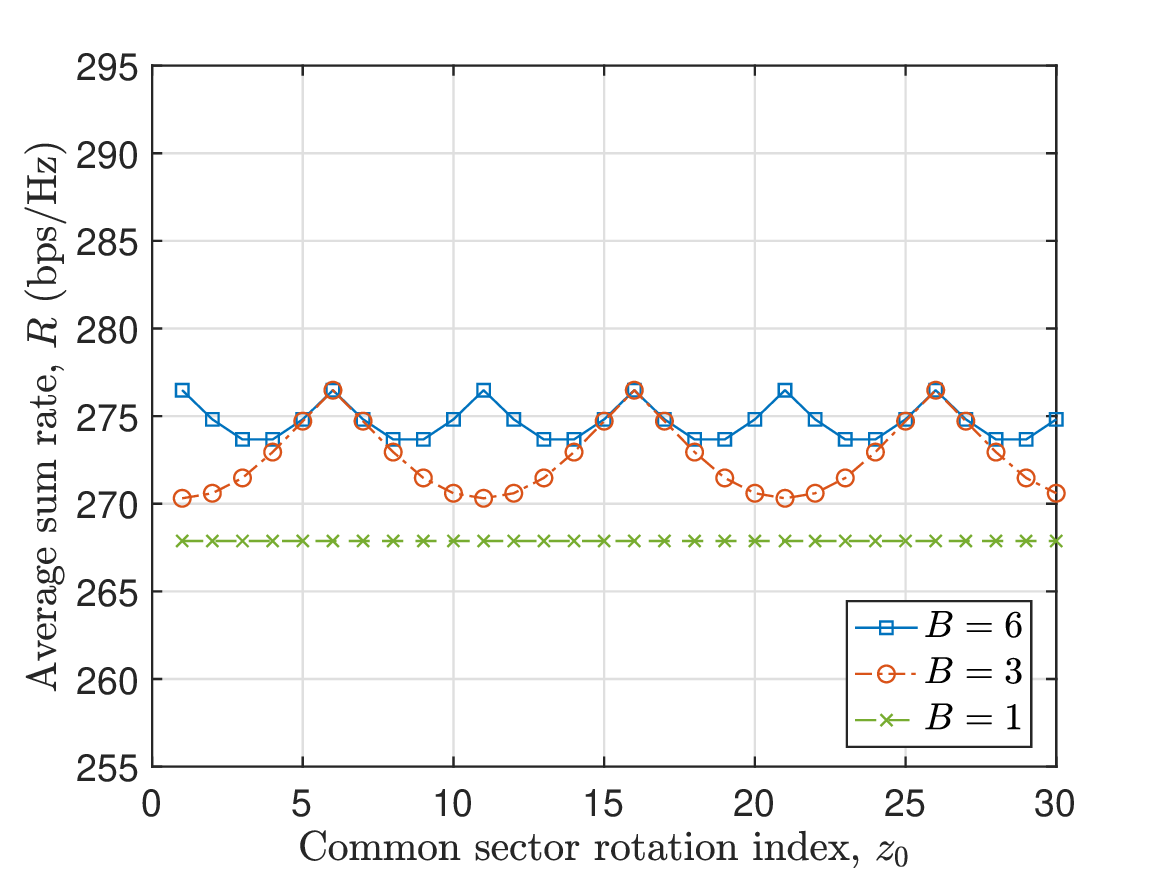} 
			\par\vspace{4pt}  
			\small (a) User distribution I.
			\label{fig:sumR_diff_S_diff_z0_TypeI}
		\end{minipage}  
		\begin{minipage}[t]{0.48\linewidth}
			\includegraphics[width=\linewidth]{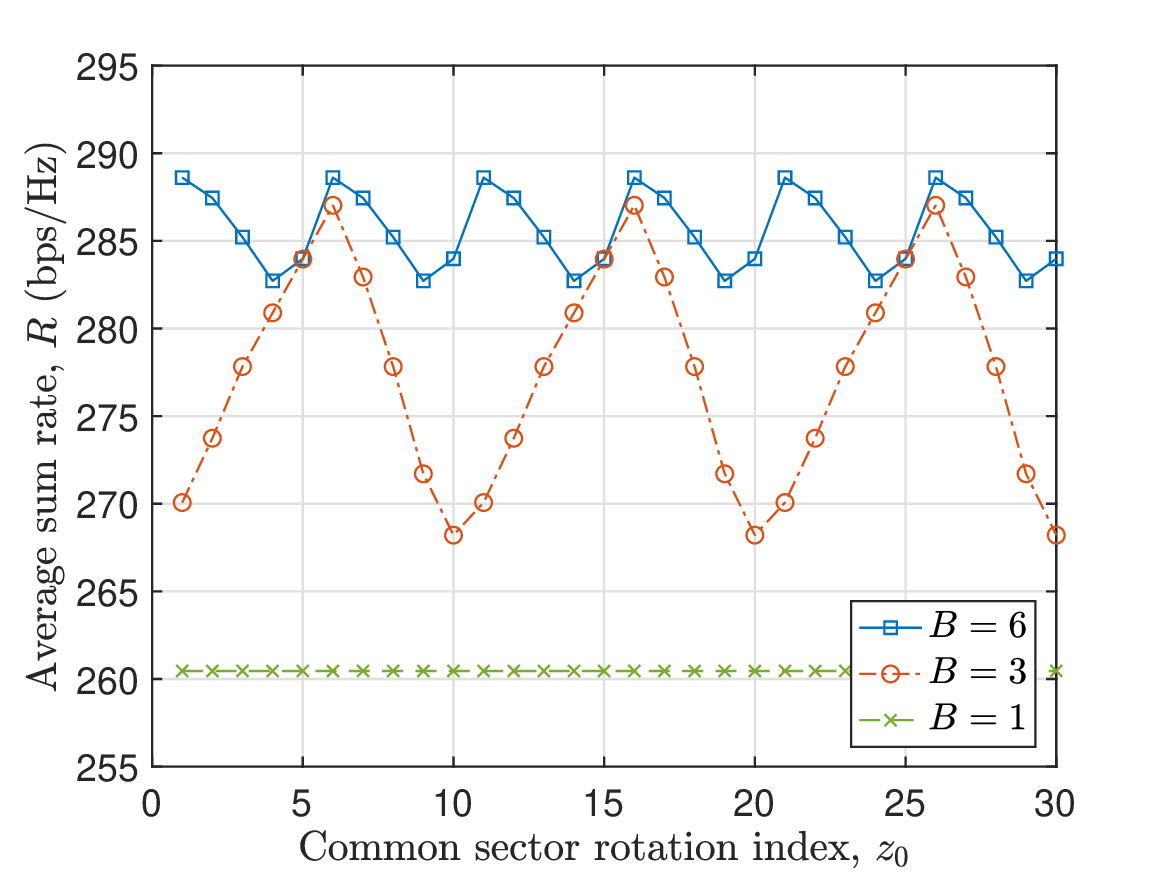}
			\par\vspace{4pt}  
			\small (b) User distribution II.
			\label{fig:sumR_diff_S_diff_z0_TypeIII}
		\end{minipage} 
		\caption{Average sum rate versus $z_{0}$ under different user distributions.}
		\label{fig:perf_z0}
	\end{figure}

	\begin{figure}[h!]
		\centering 
		\begin{minipage}[t]{0.48\linewidth}
			\includegraphics[width=\linewidth]{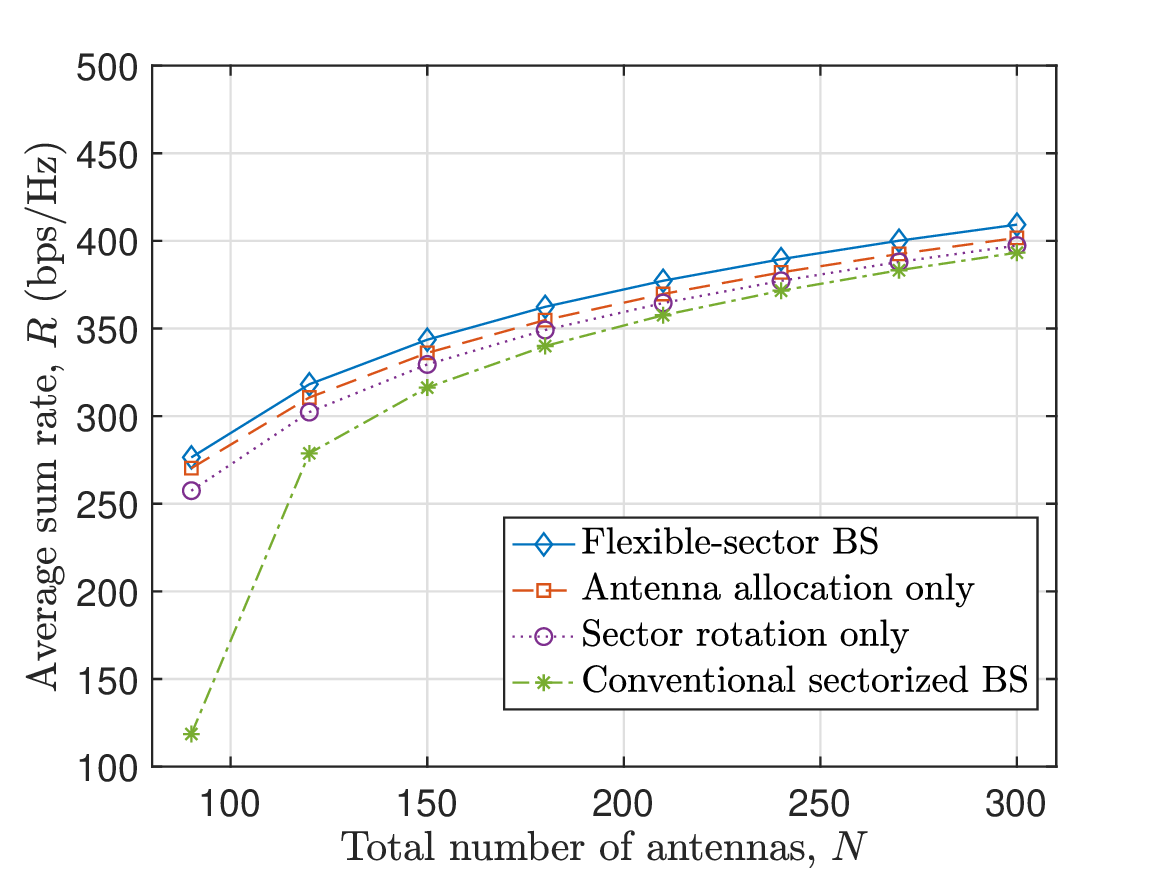} 
			\par\vspace{4pt}  
			\small (a) User distribution I.
			\label{fig:compI}
		\end{minipage}  
		\begin{minipage}[t]{0.48\linewidth}
			\includegraphics[width=\linewidth]{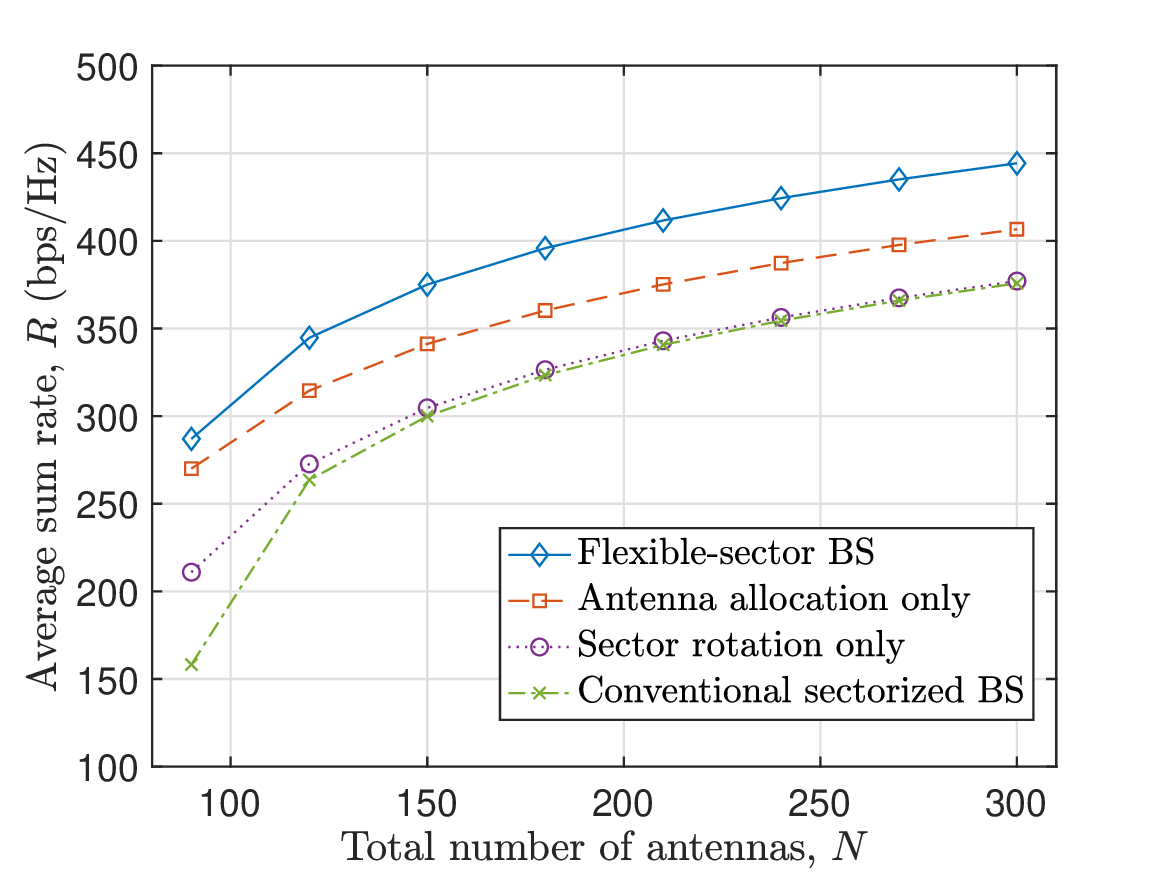}
			\par\vspace{4pt}  
			\small (b) User distribution II.
			\label{fig:compII}
		\end{minipage} 
		\caption{Performance comparison of the flexible-sector BS with other benchmarks.}
		\label{fig:comp}
	\end{figure}

	\begin{figure*}[t!]
		\centering 
		\begin{minipage}[t]{0.24\linewidth}
			\includegraphics[width=\linewidth]{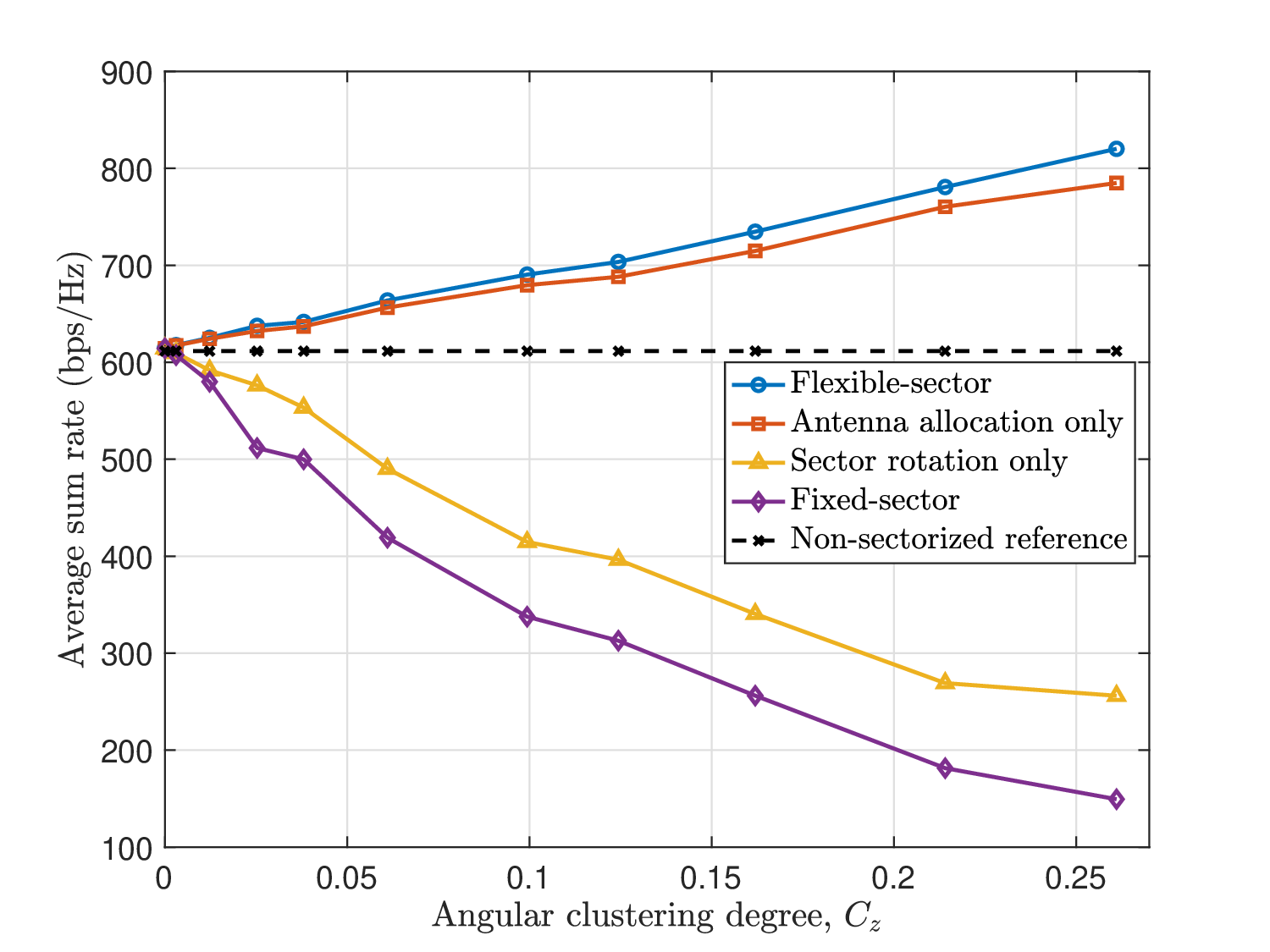} 
			\par\vspace{4pt}  
			\small (a) Ideal sector pattern.
			\label{fig:perf_ideal}
		\end{minipage}  
		\begin{minipage}[t]{0.24\linewidth}
			\includegraphics[width=\linewidth]{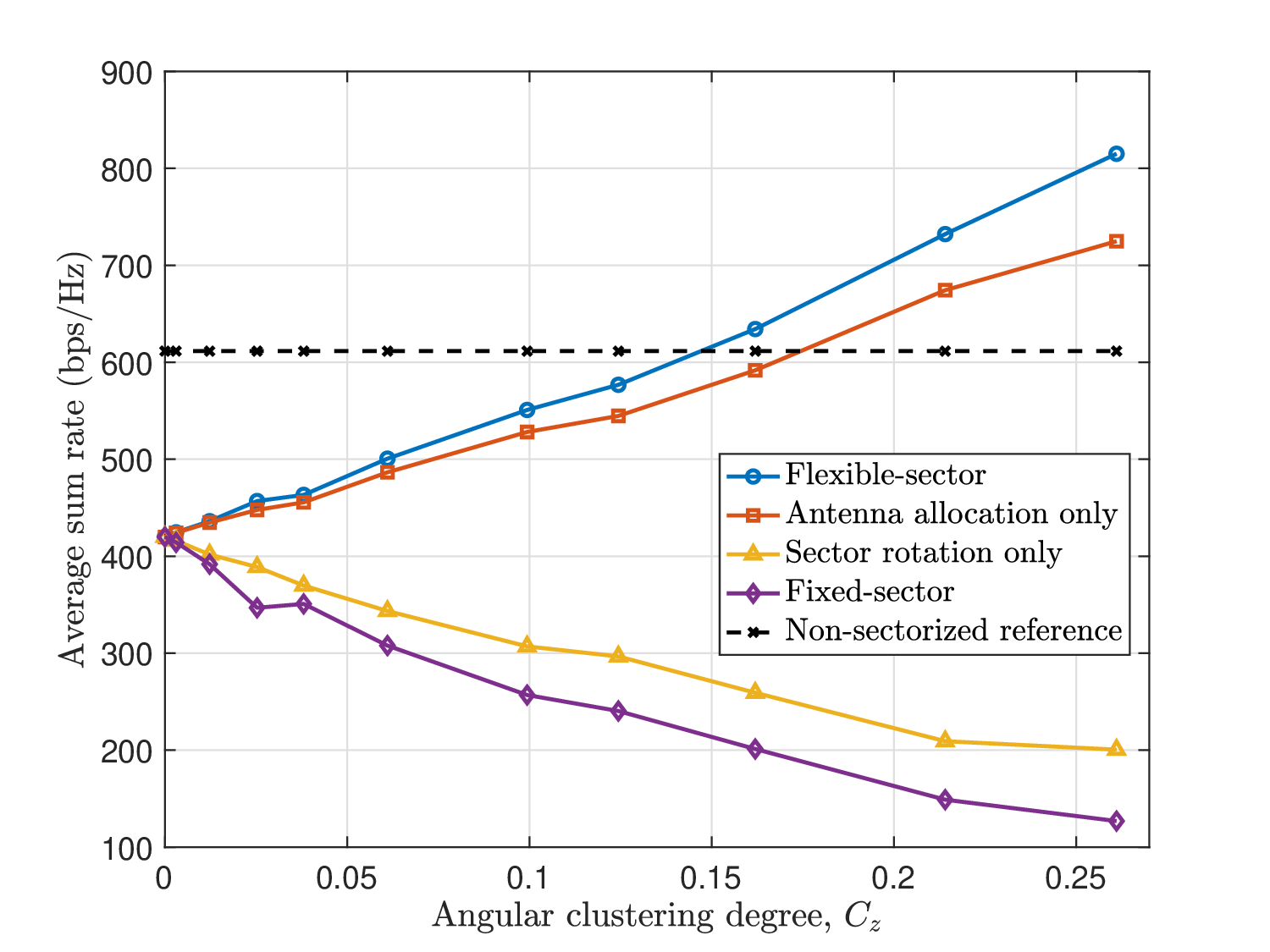}
			\par\vspace{4pt}  
			\small (b) FSL pattern ($A_{\rm sl}=20$ dB).
			\label{fig:perf_A20}
		\end{minipage} 
		\begin{minipage}[t]{0.24\linewidth}
			\includegraphics[width=\linewidth]{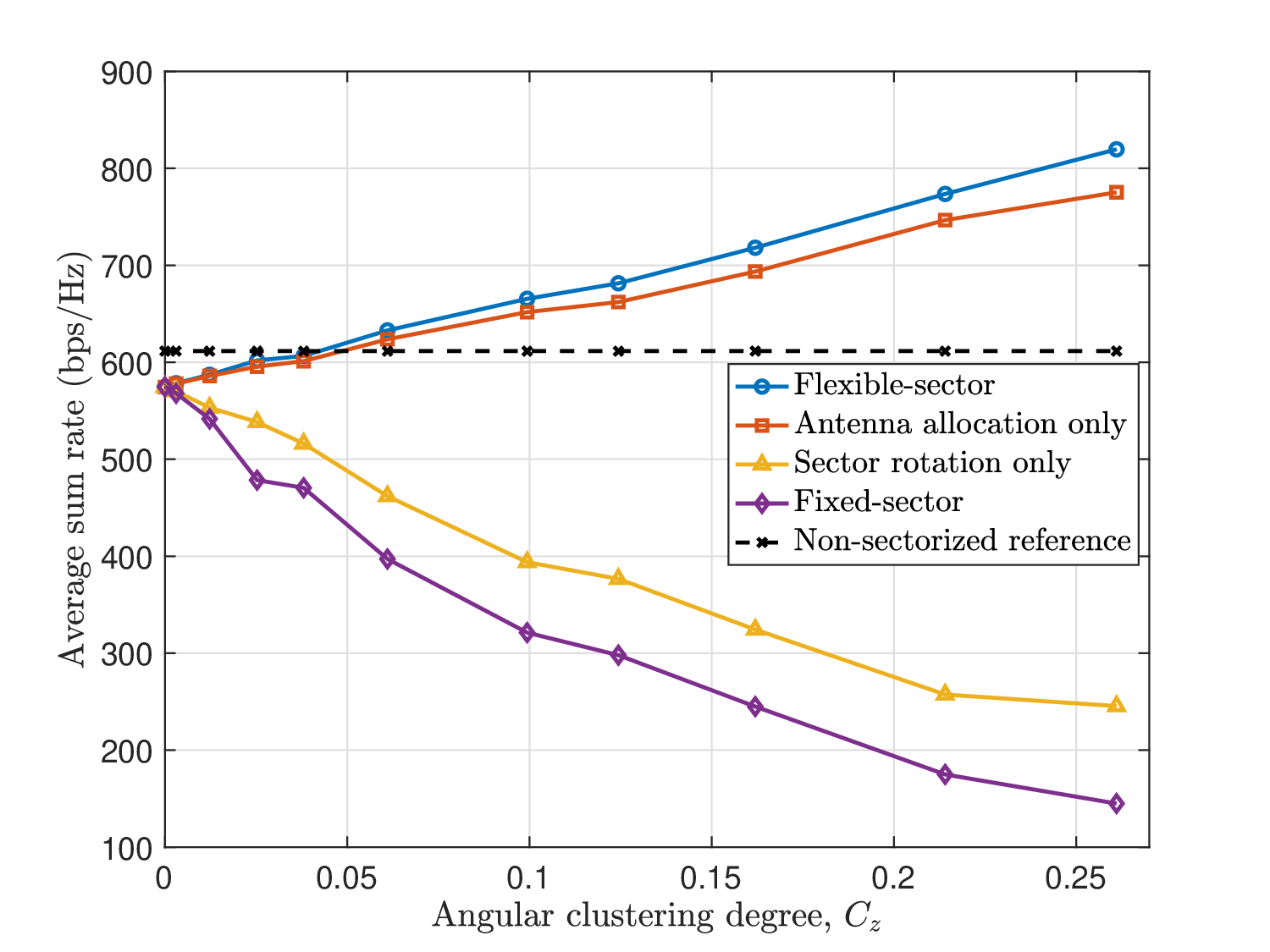}
			\par\vspace{4pt}\small  (c) FSL pattern ($A_{\rm sl}=30$ dB).
			\label{fig:perf_A30}
		\end{minipage} 
		\begin{minipage}[t]{0.24\linewidth}
			\includegraphics[width=\linewidth]{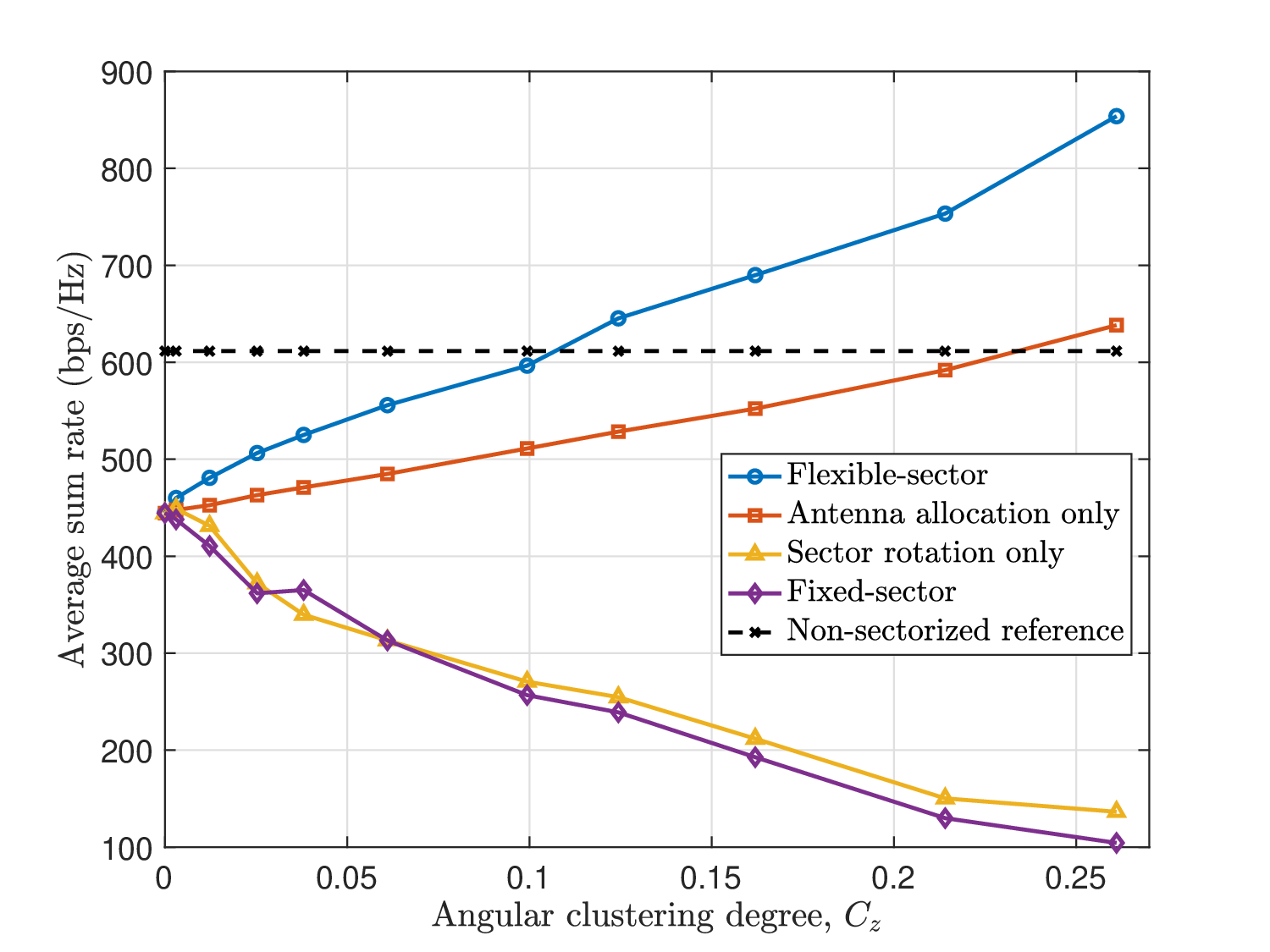}
			\par\vspace{4pt}  
			\small (d) 3GPP-like horizontal pattern.
			\label{fig:perf_3gpps}
		\end{minipage} 
		\caption{Achievable sum rate comparison under different sector antenna radiation patterns.}
		\label{fig:antennaperf}
	\end{figure*}

\vspace{-2em}
\subsection{Robustness to Angular Traffic Clustering and Non-Ideal Antenna Patterns}
\label{sec:numerical_practical}

In this subsection, we evaluate the impact of angular-domain traffic heterogeneity on the performance of the proposed flexible-sector BS under different antenna radiation patterns. The BS is equipped with $B=5$ sectors. Hence, each sector covers $Z/B=6$ angular zones under a fixed sector partition. The total number of BS antennas and users are fixed as $N=200$ and $K=90$, respectively.  
	
To generate user distributions with controllable angular-domain clustering levels, we model the user probability over angular zones as a mixture of a uniform distribution and a circular Gaussian-like hotspot distribution, given by 
\begin{equation}
	p_z(\alpha) = (1-\alpha)\frac{1}{Z} + \alpha q_z,\quad 0\leq \alpha \leq 1,
\end{equation}
where $q_z$ denotes the circular Gaussian-like hotspot distribution centered at a certain angular zone. The parameter $\alpha$ controls the transition from nearly uniform traffic to clustered hotspot traffic. Specifically, $\alpha=0$ corresponds to a uniform angular-domain user distribution, while a larger $\alpha$ leads to a more concentrated user distribution around the hotspot. The integer number of users in each angular zone, denoted by $K_z$, is generated according to $p_z(\alpha)$ while satisfying $\sum_{z=1}^{Z}K_z=K$. To quantify the degree of angular-domain clustering, we use the normalized angular-domain Herfindahl-Hirschman Index (HHI), defined as
\begin{equation}
	C_z=	\frac{\sum_{z=1}^{Z}\left(K_z/K\right)^2-\frac{1}{Z}}{1-\frac{1}{Z}}.
\end{equation}
Hence, $C_z=0$ corresponds to the uniform distribution over all angular zones, while a larger $C_z$ indicates a more spatially concentrated user distribution in the angular domain. For each clustering level, the results are averaged over all possible hotspot center locations to avoid bias caused by a specific alignment between the hotspot and sector boundaries.

We consider the ideal sector pattern in \eqref{eq:antennaG}, the finite-side-lobe (FSL) pattern in \eqref{eq:ASL_antenna} with $A_{\rm sl}=30$ dB and $A_{\rm sl}=20$ dB, and the 3GPP-like horizontal pattern in \eqref{eq:3gpp_antenna}. For the non-ideal antenna patterns, the corresponding SINR of user $k\in\mathcal Q_b$ is computed as
	\begin{equation}
		\tilde \gamma_{b,k}
		=
		\frac{
			P_k|\mathbf w_{b,k}^H \mathbf {\tilde h}_{b,k}|^2
		}{
			\sum\limits_{j\in\mathcal Q_b,j\neq k}
			P_j|\mathbf w_{b,k}^H\mathbf {\tilde h}_{b,j}|^2
			+
			\sum\limits_{\ell\notin\mathcal Q_b}
			P_\ell|\mathbf w_{b,k}^H\mathbf {\tilde h}_{b,\ell}|^2
			+
			\delta^2\|\mathbf w_{b,k}\|^2
		}. \notag
	\end{equation}
	The first interference term is eliminated when ZF combining is applied within the intended sector, whereas the second term captures the residual inter-sector leakage caused by finite side lobes. The corresponding sum rate is obtained by Monte Carlo averaging as
	\begin{equation}
		\tilde R
		=
		\sum_{b=1}^{B}
		\sum_{k\in\mathcal Q_b}
		\mathbb E
		\left[
		\log_2(1+\tilde \gamma_{b,k})
		\right].
	\end{equation}
In addition, we include the non-sectorized BS in \eqref{eq:perf_non-sectorized} as a centralized reference, where all $N$ antennas jointly serve all users without sector partition. Unlike the sectorized architectures, this reference is unaffected by inter-sector leakage and is therefore used only to provide a centralized performance baseline.

Fig. \ref{fig:antennaperf} shows the average achievable sum rate versus the angular clustering degree $C_z$ under four types of antenna patterns, where $A_{\rm m }=30$ dB and $\theta_{\rm 3dB}=0.5\Phi$. First, the proposed flexible-sector BS consistently achieves the highest sum rate among all sectorized architectures under the considered antenna patterns. More importantly, its performance advantage generally increases with $C_z$. As users become increasingly concentrated within a limited angular region, the sector-level user loads become more heterogeneous, creating greater room for traffic-aware antenna redistribution. The proposed design can further rotate the sector boundaries to form more favorable sector-level user groupings and subsequently allocate more antennas to the heavily loaded sectors.

Second, the comparison among the sectorized schemes reveals the distinct roles of sector rotation and antenna allocation. The conventional fixed-sector BS cannot adapt either the sector boundaries or antenna resources and thus suffers significantly under highly clustered user distributions. Sector rotation alone changes the user grouping across sectors, but its equal antenna allocation limits its ability to support heavily loaded sectors. Antenna allocation alone provides a larger performance improvement by redistributing antennas according to the sector-level user loads, whereas the proposed flexible-sector BS achieves an additional gain by jointly adapting both the sector partition and antenna allocation. This observation is consistent with the analytical results in Section \ref{sec:ImpactOfUserDist}.

Third, non-ideal antenna radiation reduces the sector isolation but does not eliminate the benefit of flexible reconfiguration. The FSL pattern with $A_{\rm sl}=30$ dB yields results close to those under the ideal sector pattern. With $A_{\rm sl}=20$ dB, stronger inter-sector leakage reduces the achievable rates of the sectorized architectures, while the proposed flexible-sector BS still maintains a clear advantage, particularly at moderate and large $C_z$. A similar qualitative behavior is observed under the 3GPP-like horizontal pattern, confirming that the proposed design remains effective with a smooth and non-ideal azimuth radiation response.

The non-sectorized reference exhibits a clustering-insensitive sum rate since neither its antenna configuration nor its receiver architecture depends on the angular user distribution.

Consequently, it can outperform the sectorized architectures under strongly non-ideal radiation patterns when the users are nearly uniformly distributed. As $C_z$ increases, however, the proposed flexible-sector BS increasingly exploits the angular-domain user heterogeneity and eventually surpasses the non-sectorized reference. These results demonstrate that the proposed architecture is particularly attractive for hotspot scenarios and its gain is robust to practical sector antenna radiation patterns.

Fig. \ref{fig:diff_Rx} shows the average achievable sum rate under the 3GPP-like antenna pattern with different linear receivers, where $\rho$ denotes the regularization factor of the RZF receiver and the Rician factor is set to $6$ dB. For both Rayleigh and Rician fading channels, the proposed flexible-sector BS achieves evident gains over the benchmark schemes when RZF/LMMSE receivers are employed, and the gain generally increases with the angular clustering degree $C_z$. This is because a larger $C_z$ indicates stronger angular-domain traffic heterogeneity, which can be better exploited by jointly optimizing sector rotation and antenna allocation. It is also observed that the gain of the plain ZF receiver is less pronounced under the non-ideal 3GPP-like pattern due to residual inter-sector leakage. Overall, these results demonstrate that the proposed traffic-aware flexible-sector design remains effective under practical antenna patterns and different linear receiver designs.

\begin{figure}[h!]
	\centering 
	\begin{minipage}[t]{0.48\linewidth}
		\includegraphics[width=\linewidth]{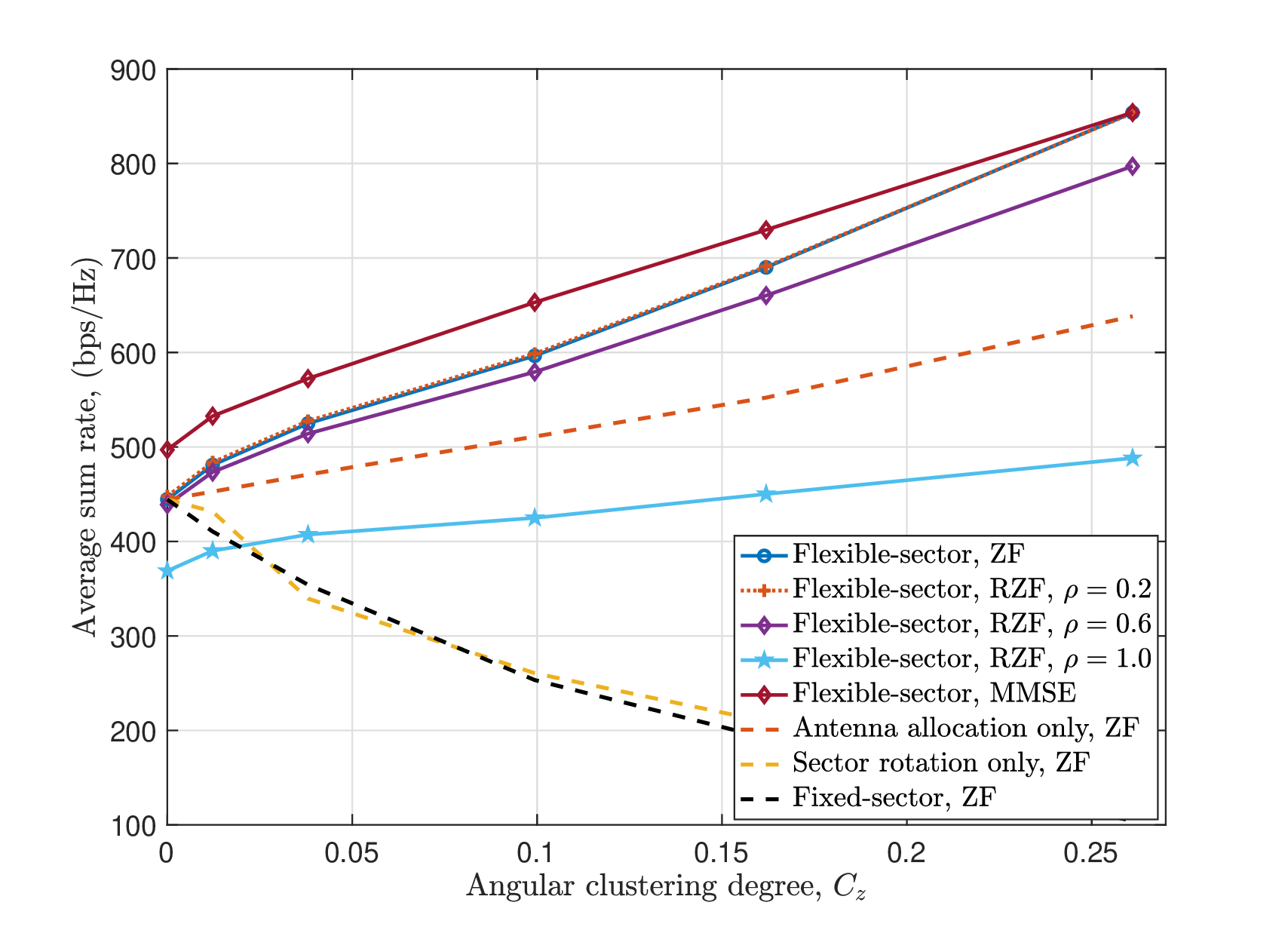} 
		\par\vspace{4pt}  
		\small (a) Rayleigh.
		\label{fig:rayleigh}
	\end{minipage}  
	\begin{minipage}[t]{0.48\linewidth}
		\includegraphics[width=\linewidth]{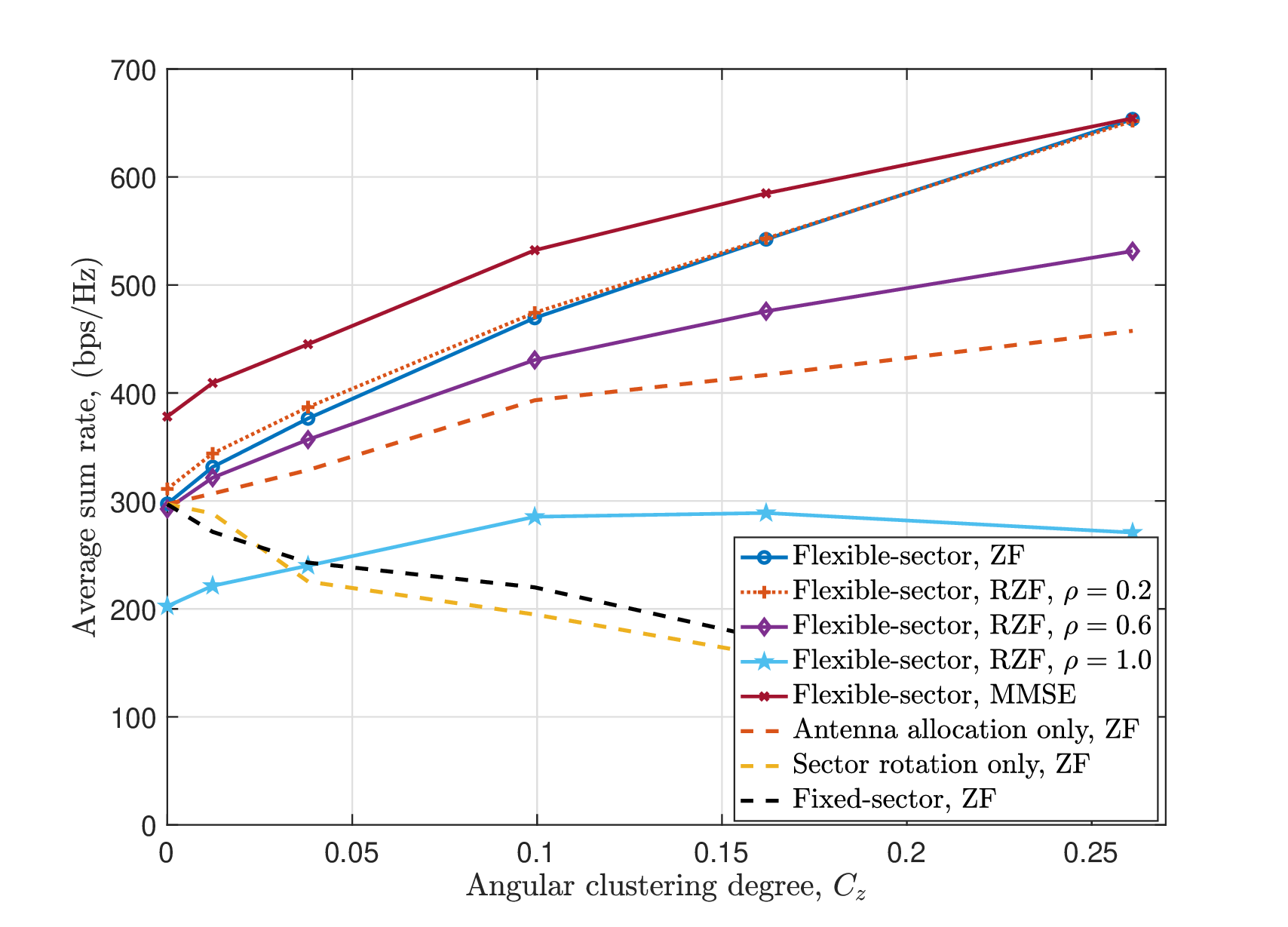}
		\par\vspace{4pt}  
		\small (b) Rician.
		\label{fig:rician}
	\end{minipage} 
	\caption{Average achievable sum rate versus angular clustering degree $C_z$ under the 3GPP-like antenna pattern with different linear receivers.}
	\label{fig:diff_Rx}
\end{figure}

%

\vspace{-2em}
\section{Conclusion}\label{sec:conclusion}


This paper proposed a traffic-aware flexible-sector BS architecture that combines common sector rotation with mechanically reconfigurable antenna-module pooling. Unlike general movable-antenna systems that optimize fine-grained antenna positions, the proposed design reallocates independently processed receive modules across sectors according to long-term angular traffic statistics. We developed an angular-domain traffic model, derived a ZF-based sum-rate lower bound, and jointly optimized sector rotation and integer antenna module allocation. The proposed marginal-gain algorithm achieves the optimal integer antenna allocation for each rotation candidate. Numerical results showed that the architecture is particularly effective under clustered traffic and remains robust to non-ideal antenna patterns and different linear receivers. Future work will consider inter-cell interference, practical reconfiguration constraints, and three-dimensional sector and traffic models.

\appendices

\vspace{-1em}

\section{Proof of Lemma~\ref{lem:bounds}}\label{append:Lemma1}

It can be shown that given $a, b > 0$, $f(x) = \log_{2}(1 + ax)$ is concave over $x > 0$, and $g(x) = \log_{2}(1+b/x)$ is convex over $x > 0$. As a result, according to the Jensen’s inequality, ${r}_{b}^{(\rm l)}$ given in \eqref{eq:lowerB_r_b} and ${r}_{b}^{(\rm u)}$ given in \eqref{eq:upperB_r_b} serve as the lower bound and upper bound of ${r}_{b}$, respectively \cite{liu2019comp}. Based on the orthogonal property of ZF combining, the combining vector $\boldsymbol{w}_{k}$ for user $k$ in sector $b$ is orthogonal to the subspace spanned by the channel vectors from the other $Q_{b}-1$ users in the same sector. Then, the upper bound ${r}_{b}^{(\rm u)}$ given in \eqref{eq:upperB_r_b} can be obtained, and the detailed proof is similar to that of \cite[Theorem 2]{liu2019comp}, and thus is omitted. Note that since the wireless channels between users and the BS follow the Rayleigh fading, i.e., $\boldsymbol{g}_{k}\sim\mathcal{CN}(\boldsymbol{0},\boldsymbol{I}_{N_{b}})$, $\boldsymbol{G}_{b}^{\mathsf{H}}\boldsymbol{G}_{b}$ is a Wishart matrix \cite{tulino2004random}. According to \cite[Lemma 2.10]{tulino2004random}, we have
	$\mathbb{E}[\text{tr}((\boldsymbol{G}_{b}^{\mathsf{H}}\boldsymbol{G}_{b})^{-1})] = \frac{Q_{b}}{N_{b}-Q_{b}}$.
Therefore, the lower bound ${r}_{b}^{(\rm l)}$ given in \eqref{eq:lowerB_r_b}  can be obtained. This completes the proof.
 
\vspace{-1em}

\section{Derivation of \eqref{sol:antennaAllo} and \eqref{eq:nu_constraint}}\label{append:Lagrange}

By applying the Lagrange multipliers to problem $({\rm P}3)$, we obtain the Lagrangian of $({\rm P}3)$ as 
\begin{align}
	L(\boldsymbol{n},\boldsymbol{\mu},\nu) = &~ \sum_{b=1}^{B} Q_{b}\log_{2}\left[ 1+ B\gamma_{0}(N_{b}-Q_{b}) \right]  \notag \\ &~ + \boldsymbol{\mu}^{\mathsf{T}}( \boldsymbol{n}- \boldsymbol{n}_{\min}) - \nu\left(\sum_{b=1}^{B}N_{b} - N \right), 
\end{align}
where $\boldsymbol{n}_{\min}\triangleq\{N_{1,\min},\cdots,N_{B,\min}\}$, $\boldsymbol{\mu}\triangleq\{\mu_{1},\cdots,\mu_{B}\}$, $\nu\geq0$, and ${\mu}_{b}\geq0$ are the Lagrange multipliers associated with constraints \eqref{con:3-1} and \eqref{con:3-2} in problem $({\rm P}3)$, respectively. 
Thus, by applying the Karush-Kuhn-Tucker (KKT) conditions, 
 we determine $N_{b}^{\star}$ in two cases:
	If ${\mu}_{b} >0$ and 
		$\nu \geq \frac{Q_{b}}{\ln2}\left( N_{b,\min}-Q_{b}+\frac{1}{B\gamma_{0}} \right)^{-1}$,
	we have
		$N_{b}^{\star}=N_{b,\min}, \quad b\in\mathcal{B}$. 
	If ${\mu}_{b}=0$
	and
		$\nu < \frac{Q_{b}}{\ln2}\left( N_{b,\min}-Q_{b}+\frac{1}{B\gamma_{0}} \right)^{-1}$,
	we have 
		$N_{b}^{\star}= Q_{b}\left( 1+\frac{1}{ {\nu}^{\star}\ln2} \right) - \frac{1}{B\gamma_{0}},\quad b\in\mathcal{B}$. 
Therefore, the solution for problem $({\rm P}3)$ is given by \eqref{sol:antennaAllo}.

\vspace{-1em}

\section{Solution of Problem $({\rm P}5)$ }\label{append:UserDist}

By applying the Lagrange multiplier, we have
\begin{equation}\label{append:eq:Lag_Q}
	L(\boldsymbol{q},\xi) = \sum_{b=1}^{B}
	Q_{b} \log_{2}\left( Q_{b}\eta \right)  + \xi \left( \sum_{b=1}^{B} Q_{b} - K \right) , 
\end{equation}
where $\eta= \frac{B\gamma_{0}(N-K+\frac{1}{\gamma_{0}})}{K}$, and $\xi$ is the Lagrange multiplier associated with constraint \eqref{con:5-1}. Then, by solving the KKT conditions of \eqref{append:eq:Lag_Q}, we obtain
	$Q_{b}^{\star} =  2^{-\xi-\frac{1}{\ln2}}\frac{K}{B\gamma_{0}(N+\frac{1}{\gamma_{0}}-1)}  ,\quad \forall b\in\mathcal{B}$, which is a constant.
As such, the optimal solution for problem $({\rm P}5)$ corresponds to the case when the users are equally distributed among sectors.

\bibliographystyle{IEEEtran}  
\bibliography{references}

\end{document}